\documentclass[12pt]{article}
\usepackage[utf8]{inputenc}
\usepackage[T1]{fontenc}
\usepackage{amsmath}
\usepackage{amsthm}
\usepackage{amsfonts}
\usepackage{amssymb}
\usepackage[version=4]{mhchem}
\usepackage{float}
\usepackage{color}
\usepackage{url}
\usepackage{hyperref}
\usepackage{multirow}
\usepackage{pdflscape}
\usepackage{stmaryrd}
\usepackage{graphicx}
\usepackage{booktabs}
\usepackage[a4paper, margin=1in]{geometry}
\usepackage{tikz}
\usepackage{quantikz}
\usepackage{caption}
\usepackage{subcaption}
\usepackage{lineno}  
\usepackage[authoryear]{natbib}
\bibpunct{(}{)}{;}{a}{,}{,}
\theoremstyle{definition}

\newtheorem{theorem}{Theorem}

\newtheorem{proposition}{Proposition}

\newtheorem{corollary}{Corollary}

\title{SCOPE Shrinkage: A Unified Framework for Wavelet Denoising}
\author{Dixon Vimalajeewa$^{1}$, Vijini Lakmini$^{2}$, Malith Premarathna$^{1}$,\\ Fabrizio Ruggeri$^{3}$, and Brani Vidakovic$^{2}$ \\
\\
{\small $^{1}$ Department of Statistics, University of Nebraska, Lincoln, NE 68588, USA}\\
{\small $^{2}$Department of Statistics, Texas A\&M University, College Station, TX 77843, USA} \\
{\small $^{3}$IMATI - CNR, 20133, Milano, Italy}
}

\begin{document}

\maketitle

\begin{abstract}
We introduce Symmetric CDF Oriented Probability Enhanced (SCOPE) shrinkage, a unified family of 
sign-preserving shrinkage rules
constructed from centered cumulative distribution functions of symmetric unimodal
distributions. The proposed framework generates a broad class of attenuation profiles
that interpolate between strong local shrinkage near zero and asymptotically unbiased
behavior in the tails. A general formulation is developed that separates scale and shape
effects through two interpretable parameters, allowing effective threshold location and
transition sharpness to be controlled independently.

Under explicit regularity assumptions, structural properties of SCOPE shrinkage are
established, including oddness, monotonicity, continuity, contractivity, and a mixture
representation that connects the rules to softened thresholding operators. A Bayesian and penalized likelihood interpretation is also developed:
SCOPE rules admit even penalty representations that are nondecreasing in
coefficient magnitude, and suitable subclasses arise as exact maximum a
posteriori estimators under proper symmetric unimodal priors. Representative examples based on logistic,
uniform, and Cauchy distributions illustrate how probabilistic shape governs shrinkage
behavior. Data driven parameter selection for smooth subclasses is discussed via Stein-type unbiased risk estimation.

Oracle calibrated simulation studies on standard 
Donoho-Johnstone test functions show
that SCOPE shrinkage performs competitively with several established wavelet denoising
methods, while retaining a high degree of interpretability and structural flexibility.
The results highlight centered distribution functions as a natural and versatile design
principle for shrinkage in wavelet denoising and related estimation problems.
\end{abstract}

\section{Introduction}
\label{sec:intro}

Shrinkage rules play a central role in modern nonparametric estimation, particularly in wavelet denoising, where noisy coefficients must be attenuated in a way that suppresses noise while preserving significant signal features. A useful shrinkage rule should preserve sign, contract small coefficients strongly, leave large coefficients essentially unchanged, and pass between these regimes in a stable and interpretable manner. These requirements have motivated a large literature in wavelet based estimation, where shrinkage functions are valued not only for empirical performance but also for the structural insight they provide into statistical regularization.

Classical wavelet shrinkage methods were developed by Donoho and Johnstone
\citep{DonohoJohnstone1994,DonohoJohnstone1995}, with important extensions based on SURE, multiple testing ideas \citep{AbramovichBenjamini1996}, cross-validation \citep{Nason1996}, decision theory \citep{RuggeriVidakovic1999}, $\Gamma$-minimax theory \citep{VimalajeewaDasGuptaRuggeriVidakovic2023}, and penalized likelihood formulations \citep{AntoniadisFan2001,FanLi2001}. Bayesian approaches to wavelet shrinkage have also been studied extensively, including adaptive multiscale posterior location rules and Bayes factors \citep{Vidakovic1998,VidakovicRuggeri2001,ChipmanKolaczykMcCulloch1997}. At the same time, many successful rules have been introduced in rather specialized forms, making it natural to ask whether a broader and more unified design principle may underlie them.

In this paper we introduce such a framework, termed \emph{SCOPE shrinkage}, short for \emph{Symmetric CDF Oriented Probability Enhanced} shrinkage. The basic idea is simple. Let $F$ be a symmetric unimodal distribution function on $\mathbb{R}$ and define its centered version by
\begin{eqnarray}
F^*(x) = 2F(x) - 1 .
\end{eqnarray}
The function $F^*$ is odd and bounded between $-1$ and $1$, and for regular choices of $F$ it is locally linear near the origin and saturates in the tails. These are exactly the qualitative features one wants from a shrinkage attenuation factor. Near zero, the attenuation is small and strong contraction is enforced. Far from zero, the attenuation approaches one in magnitude and large coefficients are nearly preserved.

The SCOPE construction inserts an observed coefficient $x$ into this centered distribution through the scaled argument $\lambda x$, where $\lambda > 0$ controls the effective scale at which shrinkage transitions from active to negligible. The attenuation factor is then raised to a power $k \ge 0$, which allows the sharpness of that transition to be adjusted independently of scale. This leads to the general SCOPE rule
\begin{eqnarray}
\delta(x;\lambda,k) = x \, |F^*(\lambda x)|^k .
\end{eqnarray}
This form is transparent and flexible. The factor $x$ preserves sign, while $|F^*(\lambda x)|^k$ acts as a bounded attenuation term taking values in $[0,1]$. Thus the rule never amplifies a coefficient, yet it can range from very gentle shrinkage to sharply concentrated attenuation depending on the choices of $\lambda$, $k$, and the underlying distribution $F$.

The two parameters $\lambda$ and $k$ play distinct and interpretable roles.
The parameter $\lambda$ governs the horizontal scale of the transition
region. Since the centered CDF is evaluated at $\lambda x$, the effective
width of the central shrinkage region is of order $1/\lambda$. The
parameter $k$ governs the sharpness of the transition between strong
shrinkage near zero and near identity behavior away from zero. Small values
of $k$ produce gentler attenuation, while larger values produce more
decisive shrinkage without forcing a discontinuous rule.

We first present three analytically transparent prototypes: logistic, uniform,
and Cauchy. These examples clarify the main mechanisms of the construction.
The simulation study later includes additional smooth centered CDFs, namely
normal, Laplace, and hyperbolic secant specifications, because they are natural
smooth competitors for practical denoising.
Their explicit formulas are reported in the supplementary material.

The contribution of the paper is therefore twofold. First, it provides a unified distribution based construction of shrinkage rules with interpretable scale and shape control. Second, it shows that this construction is practically useful in wavelet denoising, where one seeks rules that are competitive, stable, and amenable to data-driven tuning. Our emphasis in the main text is on the core methodological thread: definition of the class, defensible structural properties under explicit assumptions, representative prototype rules, practical parameter selection, and concise empirical validation.

The remainder of the paper is organized as follows.
Section~\ref{sec:defs} defines the SCOPE family and its main structural
properties. Section~\ref{sec:rules} presents the logistic, uniform, and
Cauchy prototypes. Section~\ref{sec:Bayes} develops the penalty and Bayesian
interpretation, including the distinction between generalized and
proper-prior MAP representations. Section~\ref{sec:simul} reports
simulation results on 
Donoho-Johnstone test functions and comparisons with
benchmark denoising methods. Section~\ref{sec:conclusions} concludes.
Additional derivations, tuning details, extensions, and empirical results
are given in the supplementary material.
\section{Definition and Basic Properties}
\label{sec:defs}

The shrinkage rules studied in this paper are motivated by the standard nonparametric
regression model
\begin{eqnarray}
y_i = f_i + \sigma \varepsilon_i, \qquad i=1,\ldots,n,
\label{eq:scope_np_reg}
\end{eqnarray}
where $\varepsilon_i \stackrel{\mathrm{iid}}{\sim} N(0,1)$ and $\sigma>0$ is assumed known.
Under an orthonormal wavelet transform, this model becomes the Gaussian sequence model
\begin{eqnarray}
x_i = \theta_i + \sigma \varepsilon_i, \qquad i=1,\ldots,n,
\label{eq:scope_gauss_seq}
\end{eqnarray}
where $x_i$ are the empirical wavelet coefficients and $\theta_i$ are the corresponding
signal coefficients. Since $\sigma$ is known, one may equivalently standardize by $\sigma$
and work with
\begin{eqnarray}
x_i^\ast = \theta_i^\ast + \varepsilon_i,
\qquad
x_i^\ast = x_i/\sigma,
\qquad
\theta_i^\ast = \theta_i/\sigma,
\label{eq:scope_gauss_seq_std}
\end{eqnarray}
so that, without loss of generality for the theoretical development, the noise variance
may be taken to be $1$.

The SCOPE shrinkage rule is applied coordinatewise in the wavelet domain to estimate the
signal coefficient vector $ (\theta_1, \dots, \theta_n)$ from the noisy empirical coefficients
$(x_1, \dots, x_n)$. The resulting estimator $(\widehat{\theta}_1, \dots, \widehat{\theta}_n)$ is then mapped back to the signal
domain by the inverse wavelet transform, producing the estimator
$\widehat{f}=(\widehat{f}_1,\ldots,\widehat{f}_n)$ of $f=(f_1,\ldots,f_n)$.
 For notational simplicity, in
what follows we write $x$ for a generic observed coefficient.

\medskip
\noindent
\noindent\textbf{Standing assumption.}
Throughout this section, unless stated otherwise, $F$ denotes a continuous
symmetric unimodal distribution function on ${\mathbb R}$ such that its
centered version
\begin{equation}
F^*(x)=2F(x)-1
\end{equation}
is continuous, odd, and satisfies
\begin{eqnarray}
xF^*(x) &>& 0, \qquad x\neq 0,\\
-1 \leq F^*(x) &\leq& 1, \qquad x\in{\mathbb R}.
\end{eqnarray}

\medskip
\noindent
\textbf{Definition 1 (General SCOPE shrinkage rule).}
Let $\lambda > 0$ and $k \ge 0$.
The SCOPE shrinkage rule associated with $F$ is defined by
\begin{eqnarray}
\delta(x;\lambda,k) = x \, |F^*(\lambda x)|^k .
\label{eq:scope_def}
\end{eqnarray}
When $k=0$, we adopt the natural convention $|F^*(\lambda x)|^0 \equiv 1$, so that
\begin{eqnarray}
\delta(x;\lambda,0) = x .
\end{eqnarray}

The construction is transparent. The coefficient $x$ is inserted into the centered distribution through the scaled argument $\lambda x$, and its magnitude is then modulated by the attenuation factor $|F^*(\lambda x)|^k$. Because $|F^*(u)| \le 1$, the rule never amplifies a coefficient. The factor $x$ outside the modulus preserves sign, while $\lambda$ and $k$ govern scale and sharpness of transition.

\medskip
\noindent
\begin{proposition}[\bf Core structural properties]
\label{prop:one}
Let $\delta(x;\lambda,k)$ be defined by
$$
\delta(x;\lambda,k)=x |F^*(\lambda x)|^k ,
$$
where $\lambda>0$ and $k\geq 0$. Then the following hold.

\begin{enumerate}
\item[(a)] $\delta(\cdot;\lambda,k)$ is odd.

\item[(b)] $\delta(\cdot;\lambda,k)$ is 
sign-preserving, that is,
$$
\operatorname{sgn}{\delta(x;\lambda,k)}=\operatorname{sgn}(x),
\qquad x\neq 0.
$$

\item[(c)] $\delta(\cdot;\lambda,k)$ is contractive in magnitude:
$$
|\delta(x;\lambda,k)|\leq |x|, \qquad x\in{\mathbb R}.
$$

\item[(d)] $\delta(\cdot;\lambda,k)$ is monotone increasing on ${\mathbb R}$.

\item[(e)] $\delta(\cdot;\lambda,k)$ is continuous on ${\mathbb R}$.

\item[(f)] $\delta(\cdot;\lambda,k)$ is 
tail-preserving in the relative sense:
$$
\frac{\delta(x;\lambda,k)}{x}\longrightarrow 1,
\qquad |x|\to\infty.
$$
\end{enumerate}
\end{proposition}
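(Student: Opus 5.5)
The plan is to reduce everything to two facts about the attenuation factor $g(x)=|F^*(\lambda x)|^k$: it is even, and on $[0,\infty)$ it is nondecreasing with values in $[0,1]$. Oddness of $F^*$ gives $|F^*(-\lambda x)|=|F^*(\lambda x)|$, so $g$ is even. Then $\delta(-x)=-x\,g(x)=-\delta(x)$, which proves (a). For (b), the standing assumption $uF^*(u)>0$ for $u\neq 0$ gives $g(x)>0$ whenever $x\neq 0$ (for $k=0$ this holds by convention). Hence $\operatorname{sgn}\delta(x)=\operatorname{sgn}(x)$. For (c), $|F^*|\le 1$ gives $0\le g\le 1$, so $|\delta(x)|=|x|\,g(x)\le |x|$.

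For (e), continuity of $F$ makes $u\mapsto|F^*(\lambda u)|$ continuous, and $t\mapsto t^k$ is continuous on $[0,1]$ for $k>0$. So $\delta$ is a product of continuous functions. The case $k=0$ is the identity and is trivial.

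For (f), since $F$ is a distribution function, $F(u)\to 1$ as $u\to\infty$. Thus $F^*(\lambda x)\to 1$ as $x\to\infty$, and by oddness $|F^*(\lambda x)|\to 1$ as $|x|\to\infty$. Therefore $\delta(x)/x=g(x)\to 1$, using continuity of $t^k$ at $t=1$.

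The main step is (d). On $[0,\infty)$, $F$ is nondecreasing, so $F^*(\lambda x)=2F(\lambda x)-1\ge 0$ is nondecreasing and $g$ is nonnegative and nondecreasing there. For $0\le x<y$ this gives $\delta(x)=x\,g(x)\le y\,g(x)\le y\,g(y)=\delta(y)$, because both factors are nonnegative and nondecreasing. Oddness from (a) transfers the inequality to $(-\infty,0]$. Finally, for $x<0<y$ we have $\delta(x)\le 0\le\delta(y)$ by (b).

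If \emph{monotone increasing} is meant strictly, the same chain gives strictness when $x>0$. Then $g(x)>0$, so $x\,g(x)<y\,g(x)$. When $x=0$ we use $\delta(0)=0<\delta(y)$ by (b). Unimodality is not even needed here, only monotonicity of $F$. The only subtle point is making sure the argument runs through products of nonnegative monotone functions rather than derivatives, since $F$ need not be differentiable (e.g.\ the uniform case).
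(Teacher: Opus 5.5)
Your proof is correct and complete. Parts (a)–(c), (e) and (f) follow from the standing assumptions exactly as you argue. For (d), running the argument through the product of two nonnegative nondecreasing functions on $[0,\infty)$, then extending by oddness, is the right choice, because it covers non-differentiable generators such as the uniform rule without any appeal to derivatives.

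The paper defers its own proof of Proposition~\ref{prop:one} to the supplementary material, so a step-by-step comparison is not possible. One point of your argument, however, goes beyond what the paper states. Your bound
$$
\delta(y)-\delta(x)=y\{g(y)-g(x)\}+(y-x)g(x)\ \ge\ (y-x)g(x)>0, \qquad 0<x<y,
$$
gives \emph{strict} monotonicity for every admissible $F$, every $\lambda>0$ and every $k\ge 0$. Together with (f), which forces $\delta(x)\to\pm\infty$, this means that all the hypotheses of Theorem~\ref{th:two} hold automatically for every SCOPE rule under the standing assumption. In particular, the strict-increase condition that Section~\ref{sec:Bayes} imposes ``for the chosen $F$, $\lambda$, and $k$'' is automatic rather than an extra assumption.

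Your remark that unimodality plays no role is also right. Only monotonicity of $F$, oddness of $F^*$, the positivity condition $uF^*(u)>0$, and the bound $|F^*|\le 1$ are used, and continuity of $F$ enters only in (e).
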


\medskip

\noindent The proof of  Proposition~\ref{prop:one} is deferred to the supplementary material. 

\medskip

The proposition captures the main qualitative behavior of SCOPE shrinkage in a compact way. The rule is odd, sign-preserving, monotone, bounded in magnitude by the identity map, and approaches the identity in the relative tail sense. These are precisely the features one wants from a stable shrinkage transformation.

\medskip

\noindent\textbf{Remark 1 (Odd-$k$ form as a special case).}
Suppose that $k$ is an odd positive integer. Then the SCOPE rule can be
written equivalently as
\begin{equation}
\delta(x;\lambda,k)=|x|\{F^*(\lambda x)\}^k .
\label{eq:scope_oddk}
\end{equation}
Indeed, since $F^*$ is odd,
\begin{equation}
F^*(\lambda x)
=
\operatorname{sgn}(x)|F^*(\lambda x)|.
\end{equation}
For odd $k$,
\begin{equation}
\{F^*(\lambda x)\}^k
=
\operatorname{sgn}(x)|F^*(\lambda x)|^k .
\end{equation}
Multiplying by $|x|$ gives the stated identity.

\medskip
\noindent
Thus, the equation (\ref{eq:scope_oddk}) is occasionally convenient algebraically, but the general form (\ref{eq:scope_def}) is the primary definition and remains valid for all real $k \ge 0$.

\subsection{Mixture of uniforms representation}

A particularly useful structural fact is that SCOPE attenuation factors inherit a mixture representation from the underlying symmetric unimodal distribution. This connects the framework to softened threshold type operators and helps explain why the family is both flexible and interpretable.

\medskip
\noindent
\begin{theorem} 
{\bf (Mixture of uniforms representation)}
Let $F$ be a continuous symmetric unimodal distribution function with centered version $F^*(x)=2F(x)-1$. Then there exists a probability distribution $G$ on $[0,\infty)$ such that
\begin{eqnarray}
|F^*(u)|
=
\int_0^\infty
\min\!\left(\frac{|u|}{t},\,1\right)\,dG(t),
\qquad u \in \mathbb{R}.
\label{eq:mix_absF}
\end{eqnarray}
Consequently, for every $\lambda>0$ and $k \ge 0$,
\begin{eqnarray}
\delta(x;\lambda,k)
=
x \left[
\int_0^\infty
\min\!\left(\frac{|\lambda x|}{t},\,1\right)\,dG(t)
\right]^k .
\label{eq:mix_scope}
\end{eqnarray}
\end{theorem}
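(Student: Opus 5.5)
The plan is to use Khintchine's characterization of symmetric unimodal distributions. By Khintchine's theorem, a distribution $F$ symmetric and unimodal about $0$ is the law of $UZ$, where $U\sim\mathrm{Unif}(-1,1)$ is independent of a nonnegative random variable $Z$. Let $G$ be the law of $Z$ on $[0,\infty)$. Conditioning on $Z=t$, the variable $UZ$ is uniform on $(-t,t)$ when $t>0$, and degenerate at $0$ when $t=0$. I would therefore write
\[
F(u)=\int_0^\infty F_t(u)\,dG(t),
\]
where $F_t$ is the uniform$(-t,t)$ distribution function, and $F_0$ is the point mass at zero.

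\textbf{Step 1: rule out an atom at $t=0$.} A positive mass $G(\{0\})$ would produce an atom of $F$ at $0$, contradicting the continuity of $F$. Hence $G(\{0\})=0$, and the $t=0$ term can be ignored. (The convention $\min(|u|/0,1)=1$ for $u\neq 0$ is then harmless.)

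\textbf{Step 2: compute the uniform components.} For $t>0$ the centered uniform CDF is
\[
2F_t(u)-1=\operatorname{sgn}(u)\min\!\left(\frac{|u|}{t},1\right),
\]
which is a direct computation.

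\textbf{Step 3: assemble the mixture.} Since $G$ is a probability measure, $\int_0^\infty 1\,dG=1$, so
\[
F^*(u)=2F(u)-1=\int_0^\infty\bigl(2F_t(u)-1\bigr)\,dG(t)=\operatorname{sgn}(u)\int_0^\infty \min\!\left(\frac{|u|}{t},1\right)dG(t).
\]
The integrand in the last expression is nonnegative. Taking absolute values therefore gives \eqref{eq:mix_absF} for $u\neq 0$. At $u=0$ both sides vanish: the left side because $F^*$ is odd, the right side because the integrand is zero for every $t>0$.

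\textbf{Step 4: the consequence.} Substituting $u=\lambda x$ into \eqref{eq:mix_absF} and inserting the result into Definition~1 yields \eqref{eq:mix_scope}. The case $k=0$ is consistent with the convention $|F^*|^0\equiv 1$.

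\textbf{Main obstacle.} The substantive step is justifying the Khintchine representation itself, which the paper does not state. I would give it directly rather than cite it. Unimodality about $0$ means $F$ is convex on $(-\infty,0)$ and concave on $(0,\infty)$. Continuity together with the standing assumption $uF^*(u)>0$ gives a density $f$ that is even and nonincreasing on $(0,\infty)$; this density can be taken right-continuous, exists a.e., and has no atom at $0$. I would then define $G$ by
\[
dG(t)=2t\,d\bigl(-f(t)\bigr)\quad\text{on }(0,\infty),
\]
where $-f$ is viewed as a nondecreasing function, so this is a Stieltjes measure. Integration by parts shows two things:
\begin{itemize}
\item $f(u)=\int_{|u|}^\infty \frac{1}{2t}\,dG(t)$ for a.e. $u$;
\item $G$ has total mass $1$.
\end{itemize}
The integration by parts requires $tf(t)\to 0$ as $t\to\infty$ and as $t\to 0$. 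The limit at infinity follows from integrability and monotonicity of $f$. The limit at zero holds when $f$ is bounded near $0$. If $f$ is unbounded at $0$, the limit at zero needs a separate check, and this is where care is needed.

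Fubini then converts the resulting mixture of uniform densities into the mixture of CDFs used in Step 3. This density-level construction, and in particular the boundary terms, is where I expect the main technical effort.
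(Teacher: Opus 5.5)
Your proposal is correct and follows essentially the same route as the paper: invoke Khintchine's scale-mixture-of-uniforms representation, compute $2F_t(u)-1=\operatorname{sgn}(u)\min(|u|/t,1)$, integrate against $G$, and take absolute values. Your explicit checks that $G(\{0\})=0$ and that the case $u=0$ works tidy up points the paper passes over. The boundary term you flagged in your optional self-contained proof of Khintchine's theorem is harmless even when $f$ is unbounded at $0$: monotonicity of $f$ gives $0\le t f(t)\le \int_0^t f(s)\,ds = F(t)-\tfrac12\to 0$ as $t\downarrow 0$.
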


\medskip
\noindent
\emph{Proof.}
By the classical Khintchine representation for symmetric unimodal laws,
a symmetric unimodal distribution can be represented as a scale mixture
of centered uniform laws \citep{Khintchine1938}.

 Thus there exists a probability distribution $G$ on $[0,\infty)$ such that
\begin{eqnarray}
F(x)
=
\int_0^\infty F_t(x)\,dG(t),
\label{eq:mix_F}
\end{eqnarray}
where, for $t>0$, $F_t$ denotes the distribution function of the uniform law on $[-t,t]$, and for $t=0$ we interpret $F_0$ as the degenerate distribution at $0$.

For $t>0$ one has
\begin{eqnarray}
2F_t(x)-1
=
\operatorname{sgn}(x)\,
\min\!\left(\frac{|x|}{t},\,1\right).
\label{eq:uniform_centered}
\end{eqnarray}
Substituting (\ref{eq:uniform_centered}) into (\ref{eq:mix_F}) and centering yields
\begin{eqnarray}
F^*(x)
=
\operatorname{sgn}(x)
\int_0^\infty
\min\!\left(\frac{|x|}{t},\,1\right)\,dG(t).
\end{eqnarray}
Taking absolute values gives (\ref{eq:mix_absF}). Replacing $u$ by $\lambda x$ and raising to the power $k$ gives (\ref{eq:mix_scope}).
\hfill $\square$

\medskip
\noindent
The interpretation is quite appealing. Each SCOPE attenuation factor can be viewed as an average of elementary attenuators of the form
\begin{eqnarray}
\min\!\left(\frac{|\lambda x|}{t},\,1\right),
\end{eqnarray}
which are the basic softened linear building blocks of threshold like behavior. The exponent $k$ then sharpens or softens this averaged attenuation without changing the fundamental sign-preserving and bounded structure of the rule.

\section{Prototype SCOPE Rules}
\label{sec:rules}

In this section we present three representative SCOPE shrinkage rules that capture the main qualitative behaviors of the framework. Rather than giving an extensive catalog of possible underlying distributions, we focus on logistic, uniform, and Cauchy models. These three examples are sufficient for the main paper. They are explicit, easy to interpret, and together they display the principal roles of central behavior, saturation, and tail thickness in shaping shrinkage.

The logistic rule provides a smooth analytic benchmark with gradual saturation. The uniform rule yields an especially transparent piecewise form, making the transition from central shrinkage to exact identity completely explicit. The Cauchy rule illustrates the effect of heavier tails, which slow the
approach of the attenuation factor to one and therefore maintain shrinkage
farther into the tails. More elaborate constructions can be viewed as extensions of these three prototypes and are deferred to supplementary material.

\subsection{The logistic SCOPE rule}

We begin with the logistic distribution, which leads to one of the simplest and most useful members of the SCOPE family. Its centered distribution function has a closed form and the resulting shrinkage rule is smooth and analytically transparent.

The standard logistic distribution function is
\begin{eqnarray}
F(x) = \frac{1}{1+e^{-x}} .
\end{eqnarray}
Its centered version is
\begin{eqnarray}
F^*(x)
=
2F(x)-1
=
\frac{e^x-1}{e^x+1}
=
\tanh\left(\frac{x}{2}\right).
\end{eqnarray}
Absorbing the factor $1/2$ into the scale parameter, we write
\begin{eqnarray}
F^*(\lambda x)=\tanh(\lambda x),
\qquad \lambda>0 .
\end{eqnarray}
The corresponding logistic SCOPE rule is therefore
\begin{eqnarray}
\delta(x;\lambda,k) = x \, |\tanh(\lambda x)|^k,
\qquad k \ge 0 .
\label{eq:scope_logistic}
\end{eqnarray}

This rule is continuous, odd, 
sign-preserving, and bounded in magnitude by $|x|$. Near the origin,
\begin{eqnarray}
\tanh(\lambda x)=\lambda x + O(x^3),
\qquad x \to 0,
\end{eqnarray}
so
\begin{eqnarray}
\delta(x;\lambda,k)
=
x\,|\lambda x|^k \bigl(1+o(1)\bigr),
\qquad x \to 0.
\end{eqnarray}
Thus small coefficients are contracted at polynomial order $k+1$. In the tails,
\begin{eqnarray}
\tanh(\lambda x) \to \operatorname{sgn}(x)
\qquad \mbox{as } |x| \to \infty,
\end{eqnarray}
and hence
\begin{equation}
\frac{\delta(x;\lambda,k)}{x}
\longrightarrow 1,
\qquad |x|\to\infty .
\end{equation}
Large coefficients are therefore preserved in relative scale.

The roles of the parameters are particularly clear in this prototype. The parameter $\lambda$ controls the width of the central transition region, while $k$ controls how sharply the rule moves from strong attenuation near zero to near-identity behavior away from zero.  Increasing $k$ produces a more decisive threshold-like transition, but the
rule remains continuous.




\begin{figure}
\centering
\begin{subfigure}{.3\textwidth}
  \centering
  \includegraphics[width=1\textwidth]{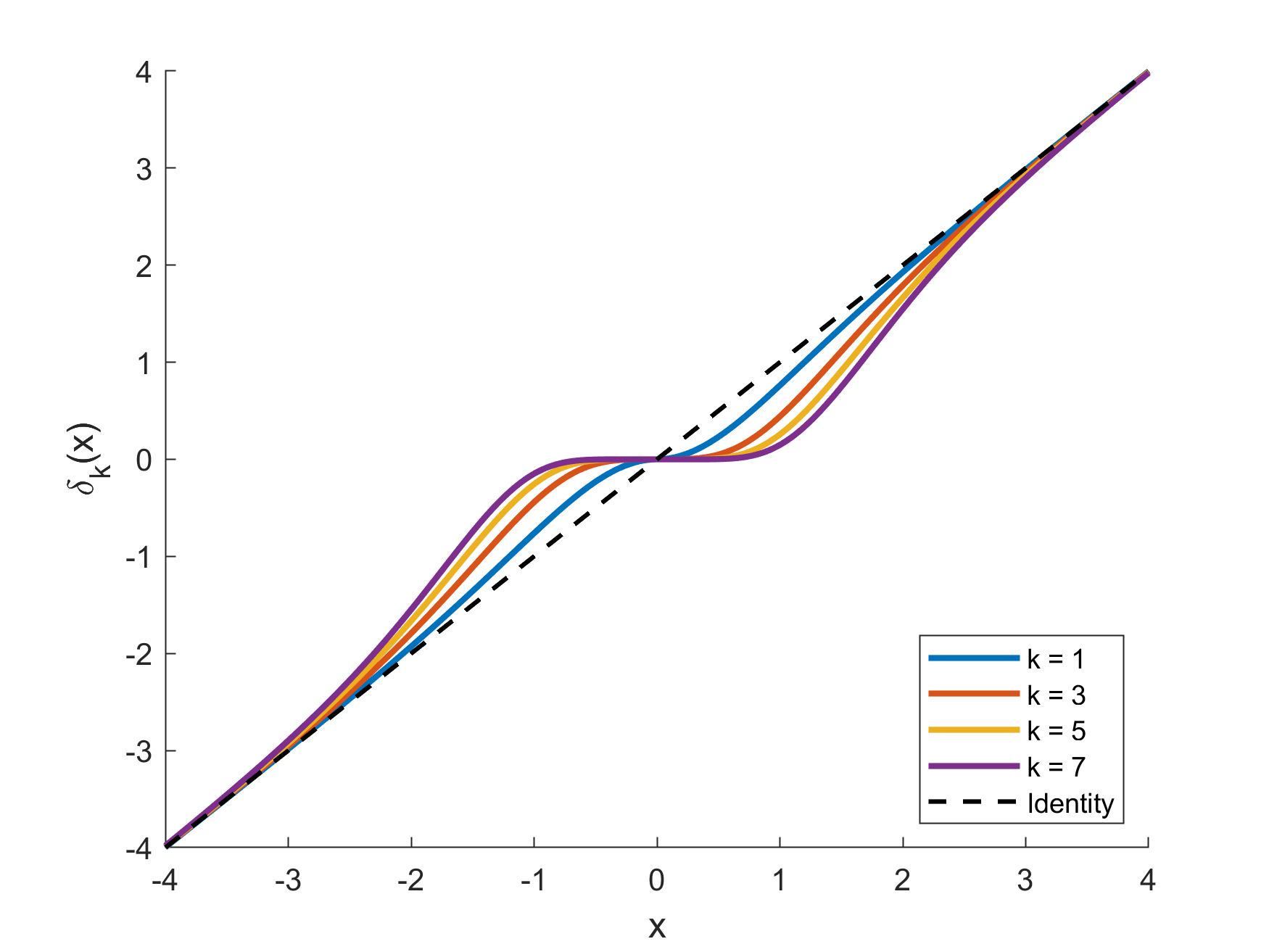}
  \caption{}
  \label{fig:SCOPETanh}
\end{subfigure}
\begin{subfigure}{.3\textwidth}
  \centering
  \includegraphics[width=1\textwidth]{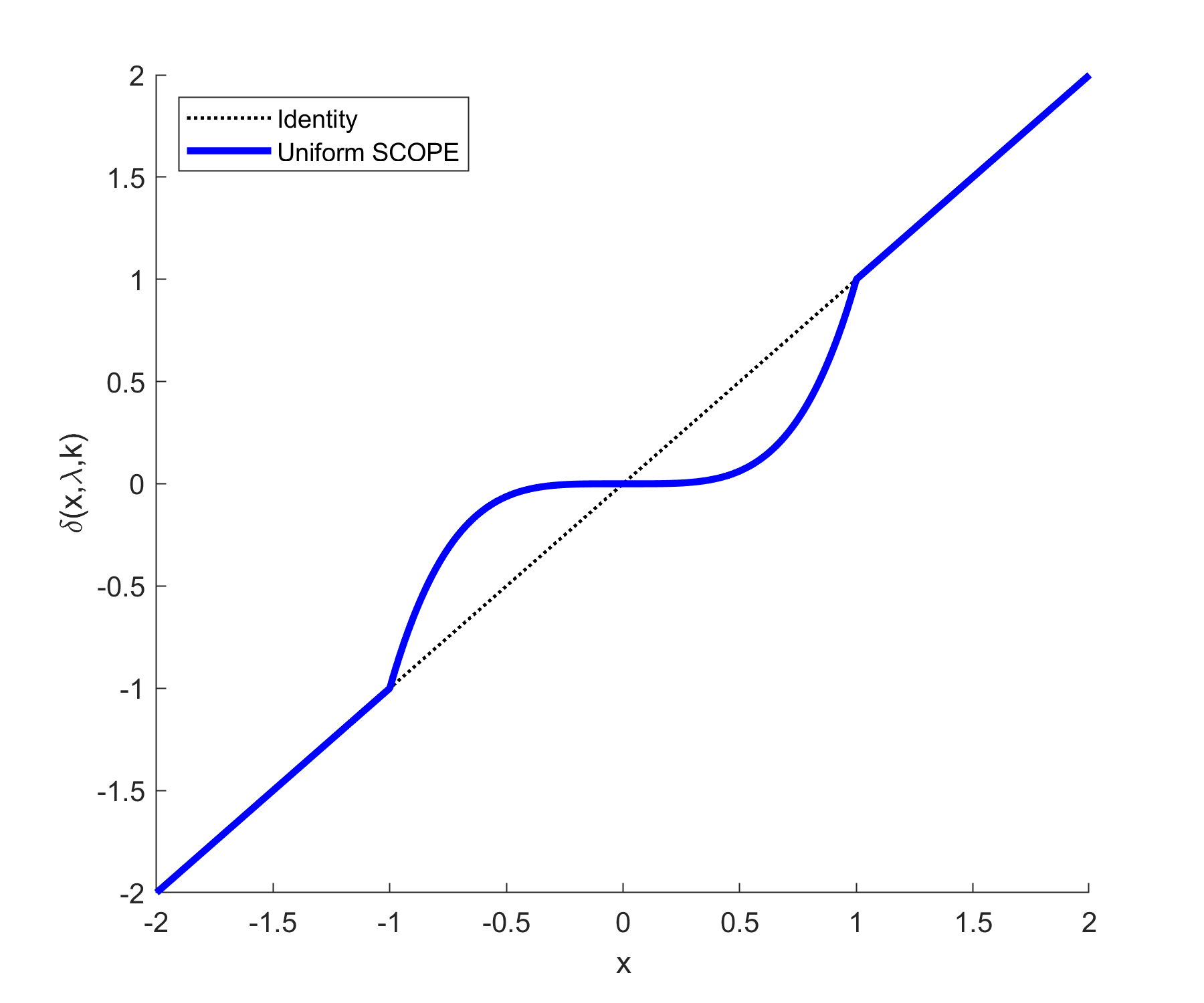}
  \caption{}
  \label{fig:SCOPEU}
\end{subfigure}
\begin{subfigure}{.3\textwidth}
  \centering
  \includegraphics[width=1\textwidth]{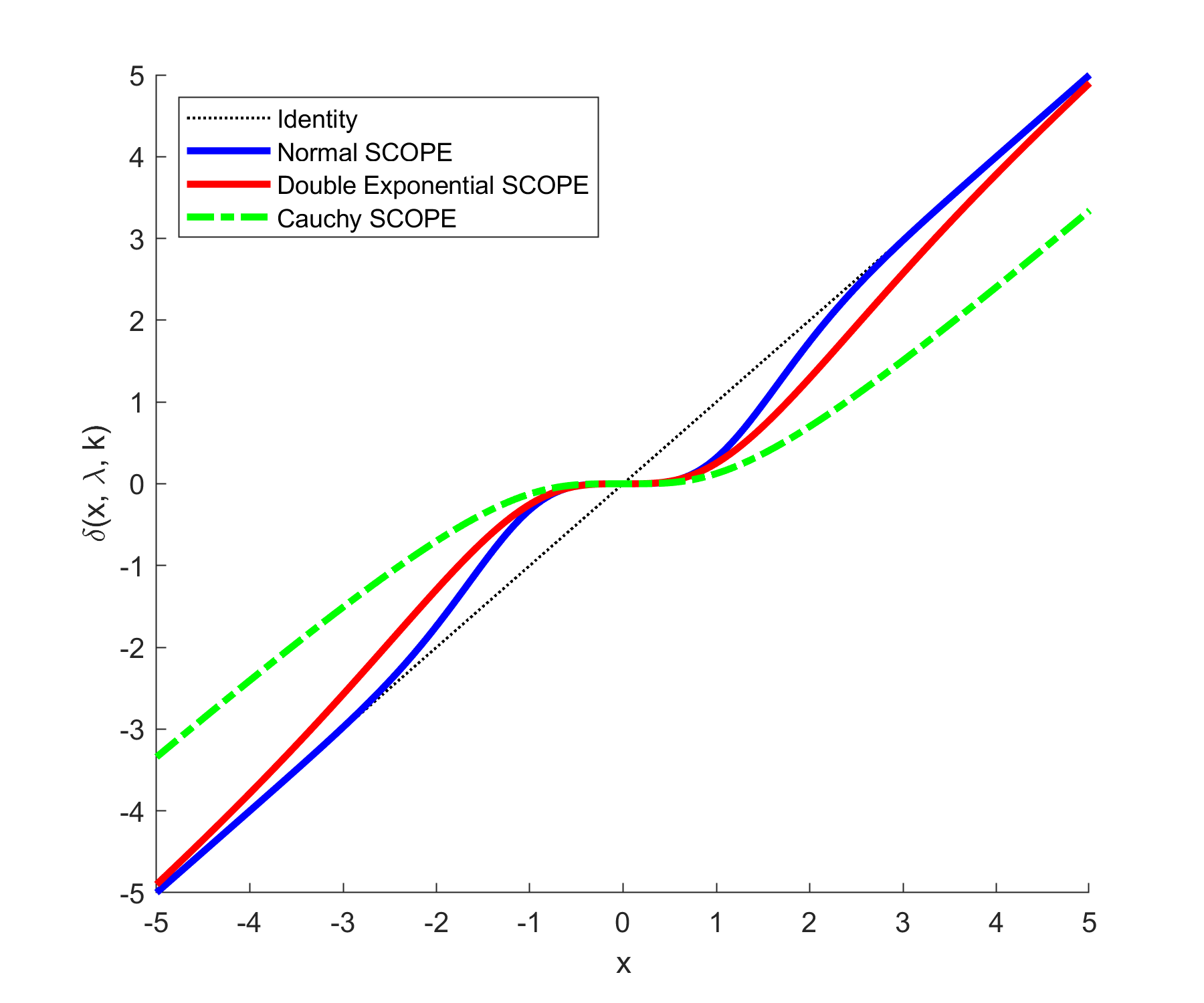}
  \caption{}
  \label{fig:SCOPE3}
\end{subfigure}
\caption{ SCOPE rules: (a) logistic SCOPE shrinkage rules with $\lambda=1$ and
$k=1,3,5,7$, with the identity line shown for comparison; (b) uniform
SCOPE shrinkage rule with $\lambda=1$ and $k=3$; and (c) comparison of
normal, Laplace, and Cauchy SCOPE rules with $\lambda=1$ and $k=3$.
The Cauchy rule remains farther from the identity for moderate and large $|x|$..}
\label{fig:test_SECOPES}
\end{figure}

Figure~\ref{fig:SCOPETanh} illustrates this progression. For small $k$ the attenuation is gentle and globally smooth. As $k$ increases, the rule becomes flatter near zero and more rapidly approaches the identity outside the central region.

\subsection{The uniform SCOPE rule}

The uniform distribution gives the most explicit prototype in the SCOPE family. Unlike the logistic case, whose attenuation factor saturates smoothly, the uniform centered distribution saturates exactly after a finite point. This yields a rule that is polynomial near zero and exactly equal to the identity outside the active shrinkage region.

Consider the uniform distribution on $[-1,1]$, whose distribution function is
\begin{eqnarray}
F(x)
=
\left\{
\begin{array}{ll}
0, & x < -1, \\[1ex]
\dfrac{x+1}{2}, & -1 \le x \le 1, \\[2ex]
1, & x > 1 .
\end{array}
\right.
\end{eqnarray}
Its centered version is
\begin{eqnarray}
F^*(x)
=
2F(x)-1
=
\left\{
\begin{array}{ll}
-1, & x < -1, \\[1ex]
x, & -1 \le x \le 1, \\[1ex]
1, & x > 1 .
\end{array}
\right.
\end{eqnarray}
The associated SCOPE rule is
\begin{eqnarray}
\delta(x;\lambda,k) = x\,|F^*(\lambda x)|^k,
\qquad k \ge 0 .
\label{eq:scope_uniform}
\end{eqnarray}
Because $F^*$ is piecewise linear, the rule can be written explicitly:
\begin{eqnarray}
\delta(x;\lambda,k)
=
\left\{
\begin{array}{ll}
x\,|\lambda x|^k, & |\lambda x| \le 1, \\[1ex]
x, & |\lambda x| > 1 .
\end{array}
\right.
\label{eq:scope_uniform_piecewise}
\end{eqnarray}

Thus, in the central region $|x|\leq 1/\lambda$, the rule has the
power-law form $x|\lambda x|^k$, with local order $k+1$. Outside that region it is exactly the identity. The parameter $\lambda$ therefore determines the width of the active shrinkage zone, while $k$ determines the curvature of the central contraction.

This example is particularly useful because it shows, in a completely explicit way, how a SCOPE rule can connect strong local contraction to exact tail preservation without any discontinuity in the rule itself. It is not smooth in the usual differentiable sense at the joining points, but it remains continuous and monotone.


Figure~\ref{fig:SCOPEU} displays the uniform prototype for $\lambda=1$ and $k=3$. The figure makes the two-regime structure visually explicit: power-law contraction near the origin and exact identity outside the interval of active shrinkage.

\subsection{The Cauchy SCOPE rule}

The third prototype is generated by the Cauchy distribution. It serves to
illustrate how heavier tails in the underlying distribution produce slower
saturation in the attenuation factor and hence more persistent shrinkage of
moderate and large coefficients.

The standard Cauchy distribution function is
\begin{eqnarray}
F(x) = \frac{1}{2} + \frac{1}{\pi}\arctan(x) .
\end{eqnarray}
Its centered version is
\begin{eqnarray}
F^*(x) = 2F(x)-1 = \frac{2}{\pi}\arctan(x) .
\end{eqnarray}
Accordingly, the Cauchy SCOPE rule is
\begin{eqnarray}
\delta(x;\lambda,k)
=
x \left|\frac{2}{\pi}\arctan(\lambda x)\right|^k,
\qquad k \ge 0 .
\label{eq:scope_cauchy}
\end{eqnarray}

Near the origin,
\begin{eqnarray}
\arctan(\lambda x)=\lambda x + O(x^3),
\qquad x \to 0,
\end{eqnarray}
so
\begin{eqnarray}
\delta(x;\lambda,k)
=
x \left|\frac{2\lambda x}{\pi}\right|^k \bigl(1+o(1)\bigr),
\qquad x \to 0.
\end{eqnarray}
Thus the Cauchy rule, like the previous prototypes, contracts small coefficients at polynomial order $k+1$. The main difference appears away from the origin. Since $\arctan(u)$ approaches its limiting values only slowly,
\begin{eqnarray}
\left|\frac{2}{\pi}\arctan(\lambda x)\right|^k
\end{eqnarray}
approaches one more gradually than in thinner-tailed models. As a result, the Cauchy CDF saturates slowly, so attenuation persists farther into the tails. This can be useful as a conservative rule, but it may overshrink moderate and large coefficients, which agrees with the simulation results.

This behavior makes the Cauchy prototype closer in spirit to rules with
persistent tail attenuation: it is conservative for moderate coefficients,
but the price is additional bias for moderately large coefficients.


Figure~\ref{fig:SCOPE3} compares the Cauchy rule with normal and Laplace variants at $\lambda=1$ and $k=3$. The Cauchy curve stands out by attenuating larger coefficients more strongly, reflecting the slower saturation induced by heavier tails.  


\section{Bayesian and penalty interpretation of SCOPE}
\label{sec:Bayes}

A complementary way to understand SCOPE shrinkage is through a Bayesian or penalized likelihood lens. This viewpoint is useful, but it must be stated carefully. For a broad SCOPE rule one can always construct a symmetric penalty for which the SCOPE estimator is the exact penalized Gaussian estimator. Whether that penalty corresponds to a proper prior density is a separate question, and depends on the tail behavior of the rule. This distinction is important for the prototype rules studied in this paper and will also clarify the role of this section relative to the simulation study.

Throughout this section, let
\begin{eqnarray}
\delta(x;\lambda,k) = x\,|F^*(\lambda x)|^k,
\qquad \lambda>0,\;\; k\ge 0,
\label{eq:scope_bayes}
\end{eqnarray}
where $F^*(x)=2F(x)-1$ is the centered version of a continuous symmetric unimodal distribution function $F$.

We assume throughout that for the chosen $F$, $\lambda$, and $k$, the map
$x\mapsto \delta(x;\lambda,k)$ is strictly increasing on ${\mathbb R}$.
From Section~\ref{sec:defs}, $\delta$ is continuous, odd,
sign-preserving, and satisfies
\begin{eqnarray}
|\delta(x;\lambda,k)| &\leq& |x|, \qquad x\in{\mathbb R},\\
\frac{\delta(x;\lambda,k)}{x} &\longrightarrow& 1,
\qquad |x|\to\infty .
\end{eqnarray}

Consider the Gaussian location model
\begin{eqnarray}
X \mid \theta \sim N(\theta,\sigma^2),
\qquad \sigma^2>0.
\label{eq:gauss_loc}
\end{eqnarray}
Since $\delta$ is continuous, odd, unbounded in both directions, and strictly increasing, it admits a continuous odd inverse
\begin{eqnarray}
h(\theta) = \delta^{-1}(\theta).
\label{eq:h_inverse}
\end{eqnarray}
Because $|\delta(x)| \le |x|$, we have
\begin{eqnarray}
h(\theta) \ge \theta
\qquad \mbox{for } \theta \ge 0.
\label{eq:h_ge_theta}
\end{eqnarray}

\medskip
\noindent
\begin{theorem}
\label{th:two}
{\bf (Generalized MAP or penalty representation of SCOPE)}
Let $\delta$ be a SCOPE rule of the form
$$
\delta(x)=x |F^*(\lambda x)|^k, \qquad \lambda>0,\quad k\geq 0,
$$
and assume that $\delta$ is continuous, odd, strictly increasing on ${\mathbb R}$, and satisfies
$$
|\delta(x)|\leq |x|,\qquad x\in{\mathbb R},
$$
together with
$$
\lim_{x\to\infty}\delta(x)=\infty,\qquad
\lim_{x\to-\infty}\delta(x)=-\infty.
$$
Then $\delta$ is a one-to-one and onto map from ${\mathbb R}$ to ${\mathbb R}$, and hence it has a continuous, odd, strictly increasing inverse $h=\delta^{-1}$. 
Define
\begin{equation}
\label{eq:scope_penalty}
\Psi(\theta)
=
\frac{1}{\sigma^2}
\int_0^\theta \{h(u)-u\}\,du .
\end{equation}
Then the following statements hold.

First, $\Psi$ is even and nondecreasing in $|\theta|$.

Second, for each observed $x\in{\mathbb R}$, the criterion
$$
Q_x(\theta)=\frac{(x-\theta)^2}{2\sigma^2}+\Psi(\theta)
$$
has the unique global minimizer
$$
\widehat{\theta}(x)=\delta(x).
$$
Equivalently, under these assumptions, the SCOPE rule is an exact penalized Gaussian estimator. It may also be interpreted as a generalized MAP estimator under the formal prior
\begin{eqnarray}
\pi(\theta)\propto \exp\{-\Psi(\theta)\},
\label{eq:formal_prior_scope}
\end{eqnarray}
which may be proper or improper.
\end{theorem}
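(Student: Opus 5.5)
The argument proceeds in three stages: establish the inverse $h$, read off the shape of $\Psi$ from the sign of $h(u)-u$, and then show that $Q_x$ has a single critical point at $\delta(x)$.

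\emph{The inverse $h$.} Since $\delta$ is continuous and strictly increasing, it is injective. Its limits at $\pm\infty$ are $\pm\infty$, so the intermediate value theorem gives surjectivity. The inverse of a continuous strictly increasing bijection of $\mathbb{R}$ is again continuous and strictly increasing. Oddness of $h$ follows from oddness of $\delta$: if $\delta(x)=\theta$, then $\delta(-x)=-\theta$, so $h(-\theta)=-h(\theta)$. Because $h$ is continuous, $\Psi$ in (\ref{eq:scope_penalty}) is well defined and $C^1$, with
$$\Psi'(\theta)=\frac{h(\theta)-\theta}{\sigma^2}.$$

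\emph{Evenness and monotonicity of $\Psi$.} The integrand $u\mapsto h(u)-u$ is odd, so its integral from $0$ is even, and hence $\Psi$ is even. By (\ref{eq:h_ge_theta}), $h(u)\ge u$ for $u\ge 0$; this holds because $u=\delta(h(u))\le h(u)$, using contractivity and $h(u)\ge 0$. Therefore $\Psi'\ge 0$ on $[0,\infty)$, and $\Psi$ is nondecreasing in $|\theta|$ by evenness.

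\emph{Minimization of $Q_x$.} This is the substantive step, since $\Psi$ need not be convex and a first-order condition alone does not identify the global minimizer. Fix $x$. The function $Q_x$ is $C^1$ with
$$
\sigma^2 Q_x'(\theta)=\theta-x+h(\theta)-\theta=h(\theta)-x .
$$
The key observation is that the quadratic term cancels the linear part of $\Psi'$, so that $Q_x'$ is a strictly increasing function of $\theta$. Since $h$ is strictly increasing, $Q_x'(\theta)<0$ for $\theta<h^{-1}(x)=\delta(x)$, and $Q_x'(\theta)>0$ for $\theta>\delta(x)$. Hence $Q_x$ is strictly decreasing on $(-\infty,\delta(x)]$ and strictly increasing on $[\delta(x),\infty)$, and $\delta(x)$ is its unique global minimizer. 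No convexity of $\Psi$ is needed, and existence of the minimizer comes for free from this monotone structure.

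\emph{Generalized MAP interpretation.} Since $\exp\{-Q_x(\theta)\}\propto \phi_\sigma(x-\theta)\,e^{-\Psi(\theta)}$, the minimizer of $Q_x$ is the mode of the formal posterior under $\pi$ in (\ref{eq:formal_prior_scope}). This holds whether or not $e^{-\Psi}$ is integrable, so $\pi$ may be proper or improper.
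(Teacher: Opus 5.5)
Your proof is correct and takes essentially the same route as the paper. Evenness comes from the oddness of $h(u)-u$, monotonicity from $h(\theta)\ge\theta$ on $[0,\infty)$, and uniqueness from the identity $\sigma^2 Q_x'(\theta)=h(\theta)-x$ together with strict monotonicity of $h$, which gives a single sign change at $\delta(x)$. You also spell out details the paper leaves implicit, namely surjectivity via the intermediate value theorem and the justification $u=\delta(h(u))\le h(u)$ for the inequality $h(u)\ge u$.
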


\medskip
\noindent
\emph{Proof.}
Since $\delta$ is odd and strictly increasing, its inverse $h$ is also odd and strictly increasing. Hence $h(u)-u$ is odd, so the integral in \eqref{eq:scope_penalty} defines an even function $\Psi$.

Next, \eqref{eq:h_ge_theta} implies that $h(\theta)-\theta \ge 0$ for $\theta \ge 0$, hence
\begin{eqnarray}
\Psi'(\theta) = \frac{h(\theta)-\theta}{\sigma^2} \ge 0
\qquad \mbox{for } \theta>0.
\end{eqnarray}
By oddness of $h(\theta)-\theta$, it follows that $\Psi'(\theta) \le 0$ for $\theta<0$. Therefore $\Psi$ is nondecreasing in $|\theta|$.

Now fix $x \in \mathbb{R}$. Differentiating $Q_x(\theta)$ gives
\begin{eqnarray}
Q_x'(\theta)
&=&
\frac{\theta-x}{\sigma^2} + \Psi'(\theta)
\nonumber \\
&=&
\frac{\theta-x}{\sigma^2} + \frac{h(\theta)-\theta}{\sigma^2}
\nonumber \\
&=&
\frac{h(\theta)-x}{\sigma^2}.
\label{eq:Qprime_scope}
\end{eqnarray}
Because $h$ is strictly increasing, $Q_x'(\theta)<0$ when $\theta<\delta(x)$, $Q_x'(\theta)=0$ when $\theta=\delta(x)$, and $Q_x'(\theta)>0$ when $\theta>\delta(x)$. Therefore $Q_x(\theta)$ decreases up to $\theta=\delta(x)$ and increases afterward. Hence $\delta(x)$ is the unique global minimizer of $Q_x(\theta)$.
\hfill $\square$

\medskip
\noindent
Theorem~\ref{th:two} is the correct general statement. Every SCOPE rule satisfying the assumptions of Theorem~\ref{th:two} is an exact penalized estimator under an even, nondecreasing in $|\theta|$ penalty. What is not automatic is that the formal prior in \eqref{eq:formal_prior_scope} is integrable.

\medskip
\noindent
\begin{corollary} \label{cor:corollary1}
Under the assumptions of Theorem~\ref{th:two}, the formal prior \eqref{eq:formal_prior_scope} is symmetric and unimodal. It is a proper prior density if and only if
\begin{eqnarray}
\int_{-\infty}^{\infty} \exp\{-\Psi(\theta)\}\,d\theta < \infty .
\label{eq:properness_int}
\end{eqnarray}
In particular, if there exist constants $\theta_0>0$ and $c>0$ such that
\begin{eqnarray}
h(\theta)-\theta \ge c
\qquad \mbox{for all } \theta \ge \theta_0,
\label{eq:proper_cond}
\end{eqnarray}
then the prior is proper.
\end{corollary}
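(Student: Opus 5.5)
The plan is to read all three claims directly off Theorem~\ref{th:two}, which gives that $\Psi$ is even, continuous (indeed $C^1$, since $h$ is continuous), and nondecreasing in $|\theta|$. First, symmetry and unimodality. Because $\Psi$ is even, $\pi(\theta)\propto\exp\{-\Psi(\theta)\}$ satisfies $\pi(-\theta)=\pi(\theta)$. Because $\Psi$ is nondecreasing on $[0,\infty)$ and nonincreasing on $(-\infty,0]$, the function $\exp\{-\Psi\}$ is nonincreasing in $|\theta|$. Hence it attains its maximum at $\theta=0$, where $\Psi(0)=0$, and it is unimodal about the origin. Here unimodality is meant in the weak sense of a mode at $0$, with possible flat stretches if $h(u)=u$ on intervals.

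Second, the properness criterion is essentially definitional. The function $\exp\{-\Psi\}$ is positive and continuous, hence locally integrable. It normalizes to a probability density exactly when its total integral \eqref{eq:properness_int} is finite. If that integral is infinite, no constant multiple is a density.

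Third, the sufficient condition, which is the only step with content. Fix $\theta\ge\theta_0$ and split the integral defining $\Psi$ at $\theta_0$. On $[0,\theta_0]$ the integrand $h(u)-u$ is nonnegative by \eqref{eq:h_ge_theta}. On $[\theta_0,\theta]$ it is at least $c$ by \eqref{eq:proper_cond}. This gives
$$
\Psi(\theta)\ \ge\ \frac{c}{\sigma^2}(\theta-\theta_0),\qquad \theta\ge\theta_0 .
$$
Therefore $\exp\{-\Psi(\theta)\}\le \exp\{-c(\theta-\theta_0)/\sigma^2\}$ on $[\theta_0,\infty)$, and the right-hand side has finite integral, namely $\sigma^2/c$. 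On the compact interval $[0,\theta_0]$ the integrand is bounded by $1$. By evenness, the integral over $(-\infty,0]$ equals the integral over $[0,\infty)$. Adding these pieces shows that \eqref{eq:properness_int} holds, so the prior is proper.

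The main point to watch is that the lower bound on $\Psi$ uses nonnegativity of $h(u)-u$ on $[0,\theta_0]$. Without it, the integral over that piece could be negative, although it would still be bounded. Nonnegativity is already supplied by \eqref{eq:h_ge_theta}, which itself comes from the contractivity in Proposition~\ref{prop:one}(c). So I expect no real obstacle beyond stating the exponential tail domination cleanly.
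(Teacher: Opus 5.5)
Your proposal is correct and follows essentially the same route as the paper. Symmetry and unimodality come from evenness and monotonicity of $\Psi$ in $|\theta|$, properness is the definitional integrability condition, and a linear lower bound on $\Psi$ beyond $\theta_0$ gives exponential tail domination. The only differences are cosmetic: you drop the term $\Psi(\theta_0)\ge 0$ where the paper keeps it, and you spell out the bounded compact piece and the evenness step, which the paper leaves implicit.
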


\medskip
\noindent
\emph{Proof.}
Symmetry follows from evenness of $\Psi$. Since $\Psi$ is nondecreasing in $|\theta|$, the function $\exp(-\Psi(\theta))$ is nonincreasing on $[0,\infty)$ and nondecreasing on $(-\infty,0]$, hence unimodal with mode at zero.

Condition \eqref{eq:properness_int} is exactly the integrability requirement for a proper prior density. If \eqref{eq:proper_cond} holds, then for $\theta \ge \theta_0$,
\begin{eqnarray}
\Psi(\theta)
&=&
\Psi(\theta_0)
+
\frac{1}{\sigma^2}\int_{\theta_0}^{\theta} \bigl(h(u)-u\bigr)\,du
\nonumber \\
&\ge&
\Psi(\theta_0) + \frac{c}{\sigma^2}(\theta-\theta_0),
\end{eqnarray}
so $\Psi(\theta)$ grows at least linearly in the tails and $\exp(-\Psi(\theta))$ is integrable.
\hfill $\square$

\medskip
\noindent
Corollary~\ref{cor:corollary1} shows clearly what can be claimed without ambiguity. SCOPE always induces an even penalty that is nondecreasing in $|\theta|$,
and hence a formal prior that is symmetric and unimodal when written in
the form (\ref{eq:formal_prior_scope}). Proper prior interpretations hold only for subclasses whose inverse map remains separated from the identity by a nonvanishing amount in the tails.

\medskip
\noindent
\textbf{Example 1 (Uniform SCOPE: generalized MAP but not proper MAP).}
For the uniform prototype,
\begin{eqnarray}
\delta(x;\lambda,k)
=
\left\{
\begin{array}{ll}
x\,|\lambda x|^k, & |\lambda x| \le 1, \\[1ex]
x, & |\lambda x| > 1 .
\end{array}
\right.
\label{eq:uniform_scope_here}
\end{eqnarray}
Hence $\delta(x)=x$ for all sufficiently large $|x|$, so its inverse satisfies
\begin{eqnarray}
h(\theta)=\theta
\qquad \mbox{for all sufficiently large } |\theta|.
\end{eqnarray}
Therefore $\Psi'(\theta)=0$ in the tails, which implies that $\Psi(\theta)$ is eventually constant. The formal prior \eqref{eq:formal_prior_scope} is then necessarily improper. Thus the uniform SCOPE rule has an exact penalty representation and a generalized MAP interpretation, but not a proper-prior MAP interpretation.

\medskip
\noindent
\noindent\textbf{Example 2 (Cauchy SCOPE: a proper prior case)}
For the Cauchy prototype with $k>0$,
\begin{eqnarray}
\delta(x;\lambda,k)
=
x\left|\frac{2}{\pi}\arctan(\lambda x)\right|^k .
\label{eq:cauchy_scope_here}
\end{eqnarray}
As $x \to \infty$,
\begin{eqnarray}
\arctan(\lambda x)
=
\frac{\pi}{2} - \frac{1}{\lambda x} + O(x^{-3}),
\end{eqnarray}
hence
\begin{eqnarray}
\frac{2}{\pi}\arctan(\lambda x)
=
1 - \frac{2}{\pi \lambda x} + O(x^{-3}).
\end{eqnarray}
Therefore
\begin{eqnarray}
\delta(x;\lambda,k)
=
x\left(1 - \frac{2}{\pi \lambda x} + O(x^{-3})\right)^k
=
x - \frac{2k}{\pi\lambda} + O(x^{-1})
\qquad \mbox{as } x \to \infty .
\label{eq:cauchy_tail_delta}
\end{eqnarray}
Inverting \eqref{eq:cauchy_tail_delta} gives
\begin{eqnarray}
h(\theta)
=
\theta + \frac{2k}{\pi\lambda} + O(\theta^{-1})
\qquad \mbox{as } \theta \to \infty .
\end{eqnarray}
Hence $h(\theta)-\theta \ge c$ for all sufficiently large $\theta$, for some $c>0$, and Corollary~\ref{cor:corollary1} shows that the induced prior is proper. Thus the Cauchy SCOPE rule is an exact MAP estimator under a proper symmetric unimodal prior.

\medskip
\noindent
\textbf{Remark.}
For thinner tailed prototypes such as the logistic and normal based rules, the quantity $h(\theta)-\theta$ typically tends to zero as $|\theta| \to \infty$, so the corresponding penalty flattens in the tails. In such cases the penalized likelihood interpretation remains valid, but the proper-prior MAP interpretation generally fails.

\medskip
\noindent
 The simulations select $(\lambda,k)$ by risk-based
calibration, whereas the present results show what those same parameters imply
penalty-wise. In particular, they determine a symmetric shrinkage penalty and,
for some subclasses, a proper prior density. Thus the Bayesian interpretation
supports the structure of SCOPE, but it is not the tuning mechanism used in the
numerical study.


\section{Simulation Study}
\label{sec:simul}

This section evaluates the empirical performance of SCOPE shrinkage under a range of signal geometries, distributional choices, and noise levels. The study is designed to address four questions. First, how sensitive is reconstruction accuracy to the choice of centered distribution function within the SCOPE family. Second, how stable is performance across different signal to noise ratios. Third, how competitive is SCOPE relative to established wavelet denoising procedures. Fourth, how does the method behave under a deliberately difficult low SNR regime.

Unless stated otherwise, all SCOPE results reported in this section are
based on oracle tuning over a finite grid of $(\lambda,k)$ values. The
reported AMSE values therefore describe benchmark performance under
calibrated choices of the two SCOPE parameters.

\subsection{Simulation design}

We consider the four standard Donoho-Johnstone test signals \emph{Doppler, Blocks, Bumps,} and \emph{HeaviSine}, shown in Figure~\ref{fig:Original_signal}. These signals represent a broad range of local regularity patterns and remain standard benchmarks in the wavelet denoising literature. Following common practice, different wavelet filters are used for different
signals: the Haar filter for Blocks, the six-tap Daubechies filter
for Bumps, and the eight-tap Symmlet filter for Doppler and HeaviSine.

\begin{figure}
\centering
\begin{subfigure}{.475\textwidth}
  \centering
  \includegraphics[width=1\textwidth]{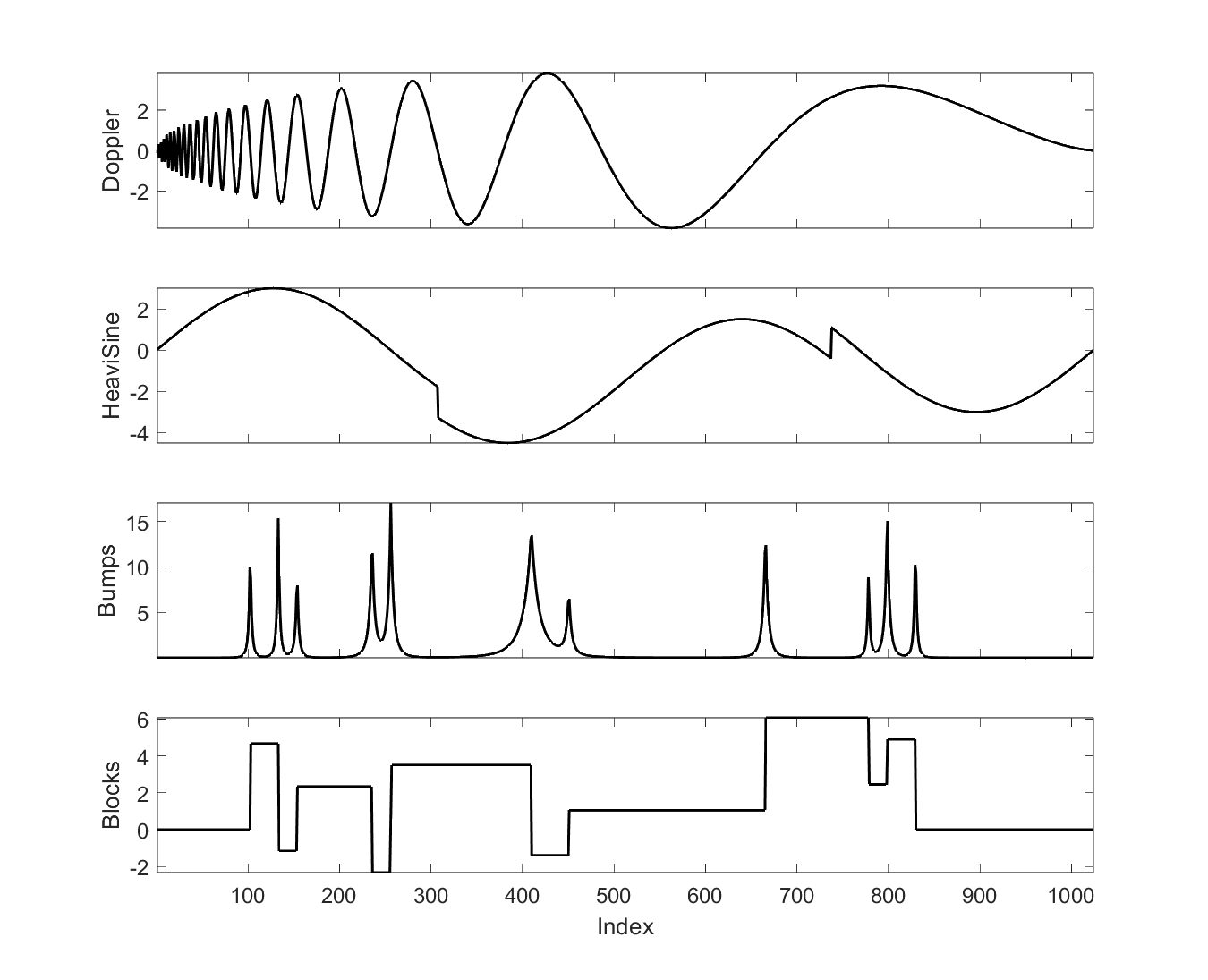}
  \caption{}
  \label{fig:Original_signal}
\end{subfigure}
\begin{subfigure}{.475\textwidth}
  \centering
  \includegraphics[width=1\textwidth]{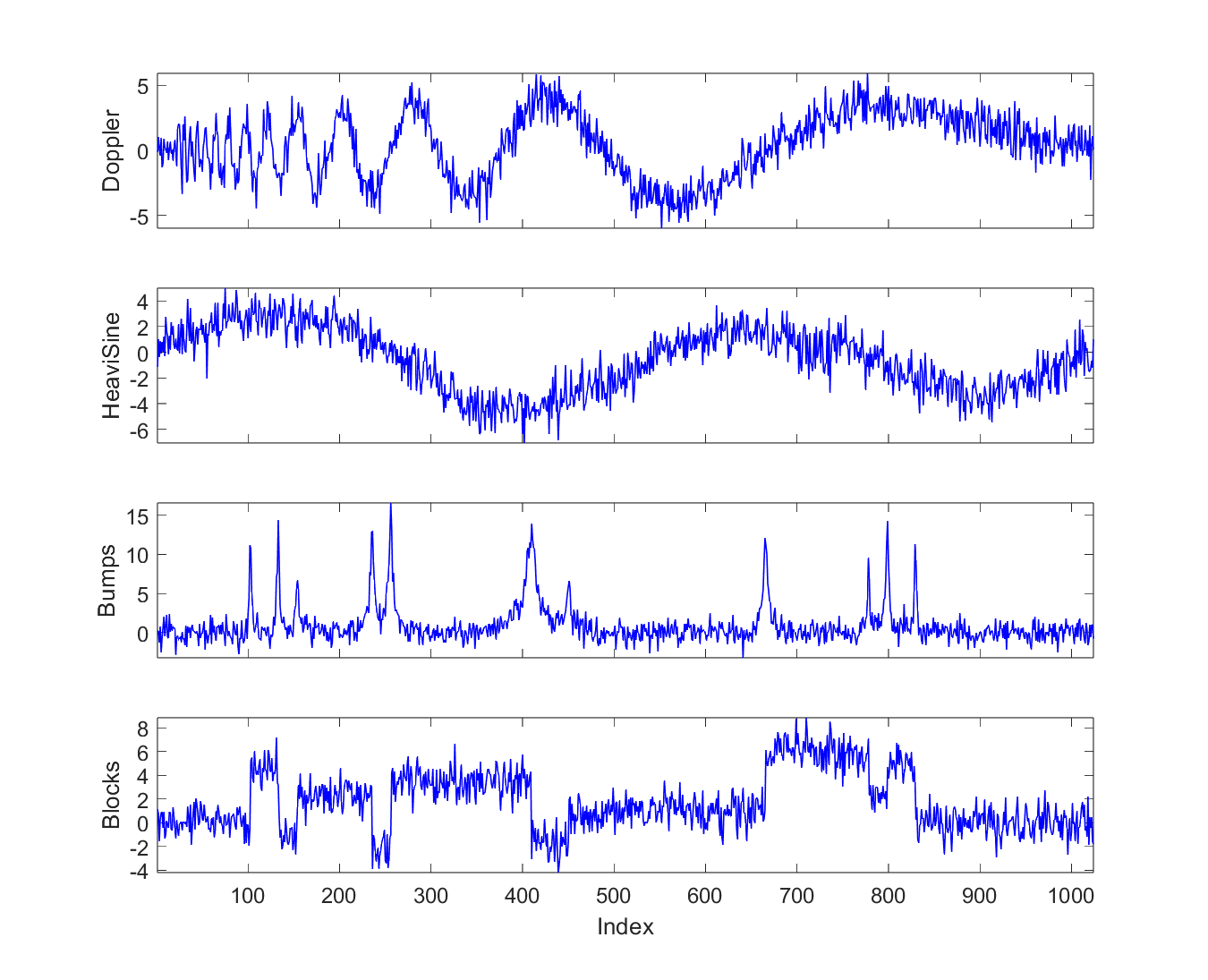}
  \caption{}
  \label{fig:Noisy_signal}
\end{subfigure}
\caption{Four benchmark signals used in the simulation study: (a) clean signals and (b) corresponding noisy realizations at SNR = 5.}
\label{fig:test_signals}
\end{figure}

All simulations are carried out in MATLAB. Existing competing procedures
are implemented using the \emph{GaussianWaveDen} toolbox, with methodological
background given in \citet{AntoniadisFan2001}.

For each signal, noisy observations are generated by additive Gaussian noise with mean zero and variance $\sigma^2=1$. The underlying signal is rescaled so that this noise level corresponds to the target signal to noise ratio. Each realization consists of $n=1024$ equally spaced observations on $[0,1]$. After transformation to the wavelet domain, shrinkage is applied to the empirical detail coefficients, followed by inverse transformation to obtain the reconstructed signal.

We define SNR as
$$
{\rm SNR}={\|f\|_2\over \sigma\sqrt{n}},
$$
so that rescaling the test signal determines the desired noise level relative
to the root mean square signal amplitude.
Performance is evaluated by average mean squared error,
\begin{eqnarray}
\label{eq:amse}
\operatorname{AMSE}(f)
=
\frac{1}{nN}\sum_{r=1}^{N}\sum_{i=1}^{n}
\left(f(t_i)-\widehat f_r(t_i)\right)^2,
\end{eqnarray}
where $f$ is the true signal, $\widehat f_r$ is the reconstruction from the $r$th Monte Carlo replication, and $N=100$ throughout. Figure~\ref{fig:Noisy_signal} shows representative noisy realizations at SNR $=5$.

\subsection{Oracle calibration over $(\lambda,k)$}

For each signal and each selected centered distribution function, the
SCOPE rule is calibrated by minimizing AMSE over a finite grid in
$(\lambda,k)$. This calibration displays the denoising potential of the
family and identifies favorable regions of the parameter space. Operational
details on parameter selection, including possible SURE-based data-driven
calibration, are deferred to the supplementary material.

To visualize the optimization landscape, Figure~\ref{fig:Doppler_performance} reports a representative example for the Doppler signal under the logistic specification at SNR $=5$. The left panel displays the AMSE surface over the $(\lambda,k)$ grid, while the right panel shows the corresponding reconstruction at the oracle optimum. The contour plot indicates a fairly broad near-optimal region rather than an isolated minimizer, which is empirically useful because it suggests that the calibration is not unduly sensitive to small perturbations of the tuning parameters.

\begin{figure}
\centering
\begin{subfigure}[t]{0.45\textwidth}
    \centering
    \includegraphics[height=6.2cm]{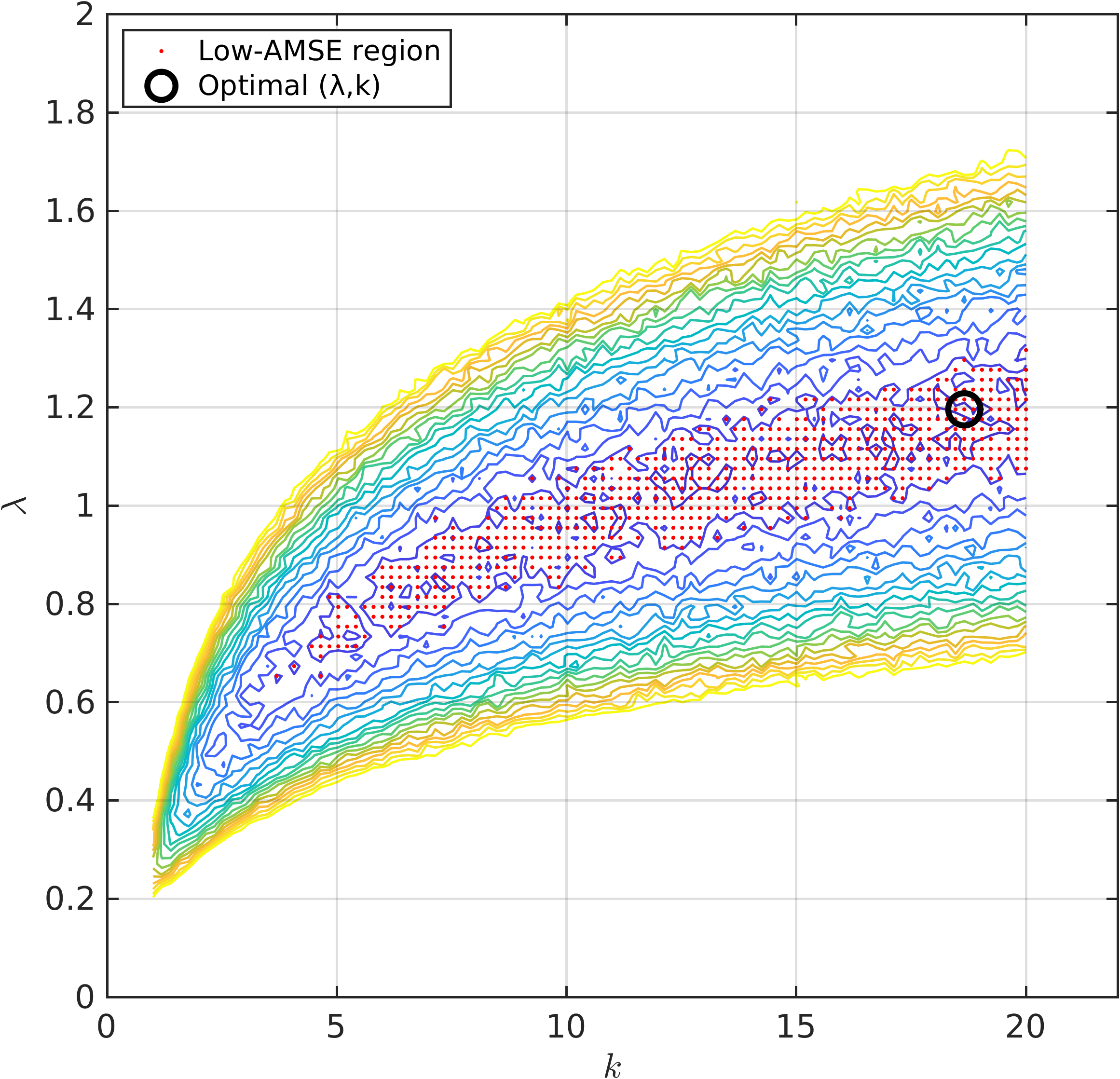}
    \caption{}
    \label{fig:contour_plot}
\end{subfigure}
\hfill
\begin{subfigure}[t]{0.495\textwidth}
    \centering
    \includegraphics[height=6.1cm]{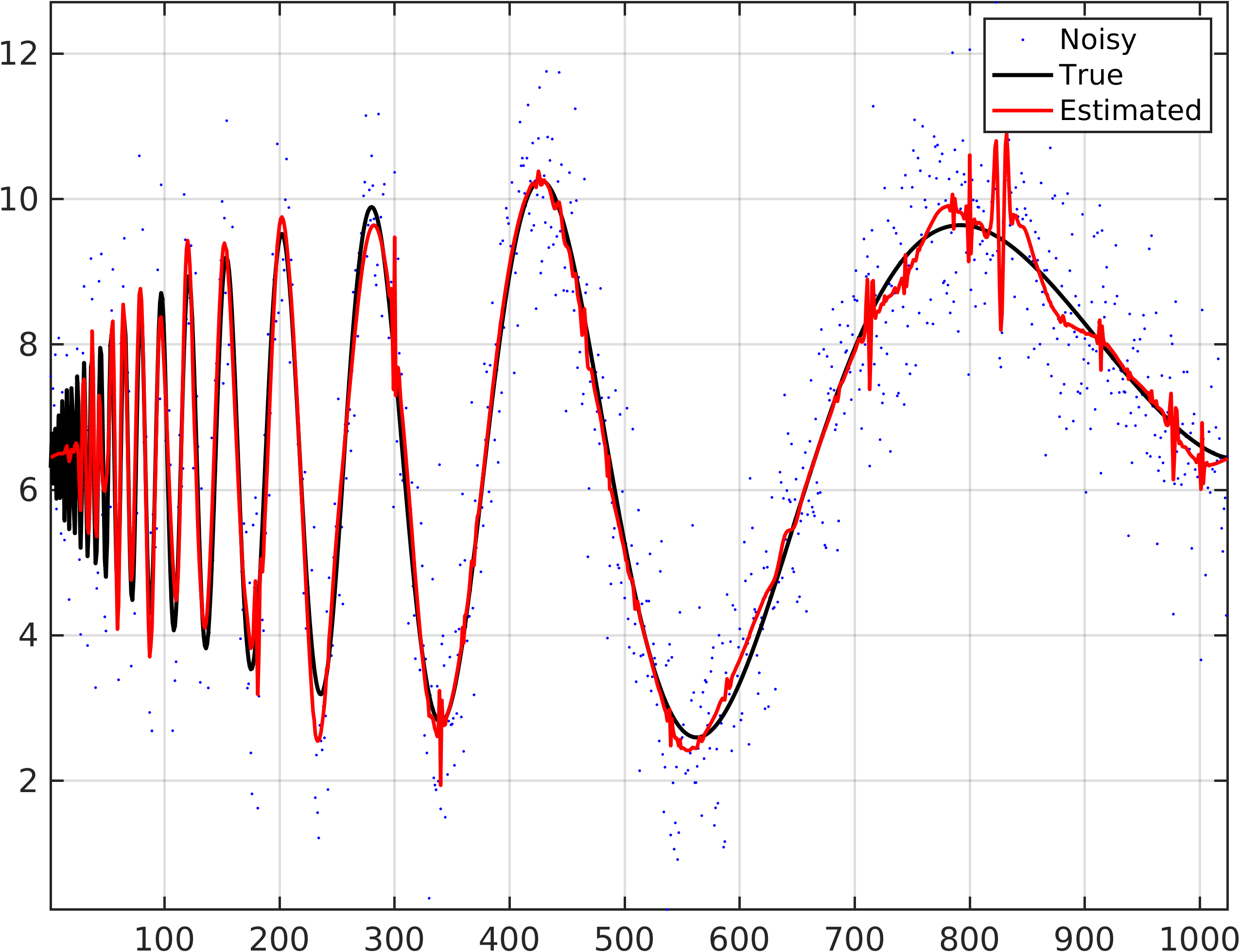}
    \caption{}
    \label{fig:reconstruction}
\end{subfigure}
\caption{Representative oracle calibration for the Doppler signal at SNR $=5$ under the logistic SCOPE rule. Panel (a) shows the AMSE surface over the $(\lambda,k)$ grid, with the near-optimal region highlighted. Panel (b) shows the reconstructed signal at the oracle optimum together with the true and noisy signals.}
\label{fig:Doppler_performance}
\end{figure}

\subsection{Performance across candidate distribution functions}\label{performance_with_CDFs}

This section examines how reconstruction performance varies across several smooth
candidate centered distribution functions within the SCOPE family. The comparison
includes Logistic, Normal, Laplace, Sech, Uniform and Cauchy specifications. 

Figure~\ref{fig:CDF_comparison} presents the Monte Carlo distribution of the resulting MSE values for the four benchmark signals at SNR $=5$. The figure shows that the Logistic, Normal, Laplace, Sech and Uniform specifications are nearly indistinguishable for the smoother signals, especially Doppler and HeaviSine. In contrast, the separation is more visible for the less regular signals Bumps and Blocks. The Cauchy specification is consistently less competitive in this experiment and exhibits substantially larger dispersion.

\begin{figure}
\centering
\includegraphics[width=.6\textwidth]{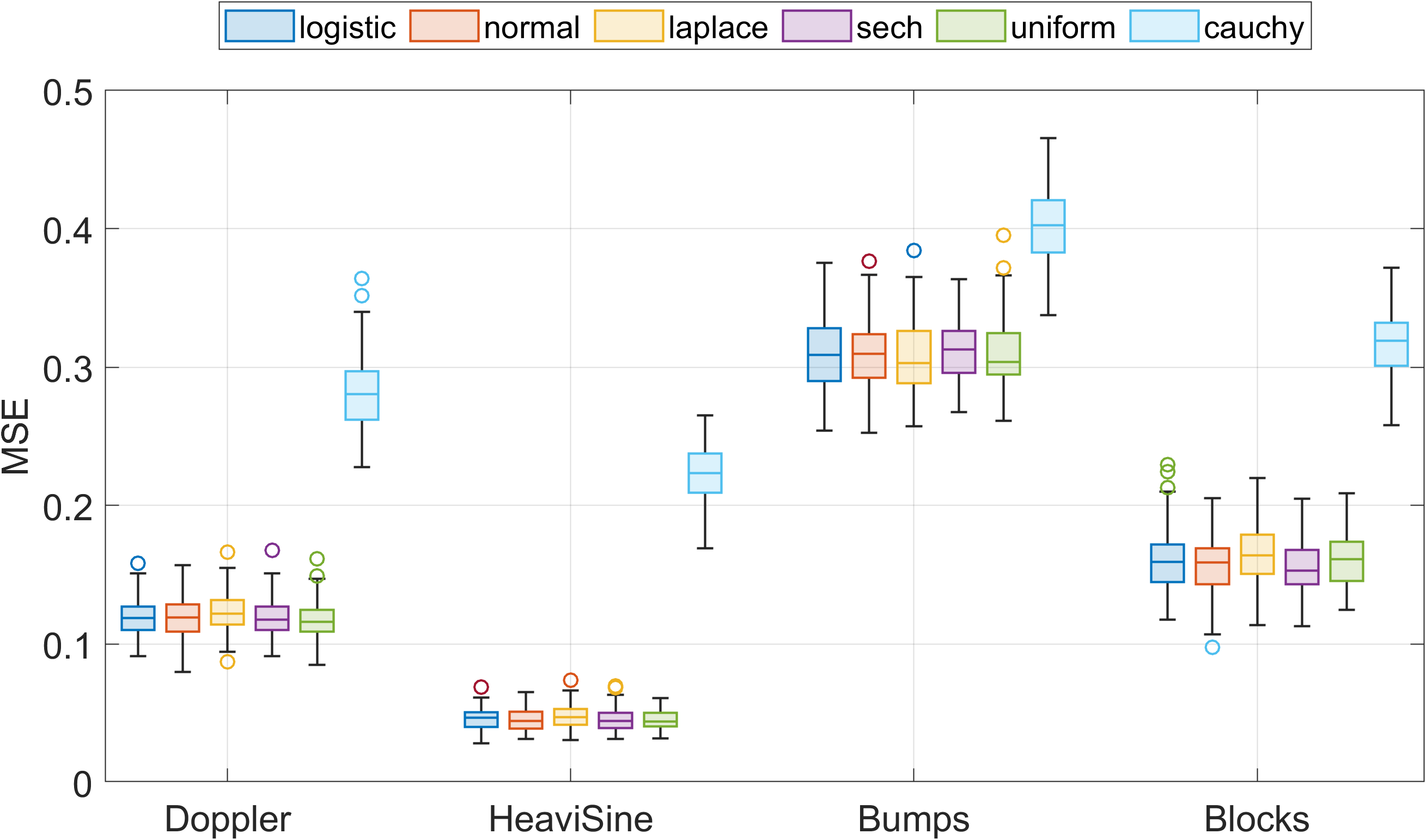}
\caption{Monte Carlo MSE distributions for SCOPE denoising under six centered distribution functions at SNR $=5$. Each boxplot is based on $100$ replications with oracle calibrated $(\lambda,k)$.}
\label{fig:CDF_comparison}
\end{figure}

Table~\ref{tab:amse_all_cdf} summarizes the corresponding oracle AMSE values together with the optimal hyperparameter pairs. The main conclusion is not that a single CDF dominates uniformly, but rather that several thin and moderate tailed specifications perform comparably well, while the preferred choice depends somewhat on signal morphology. This supports the view that the choice of centered CDF mainly provides a shape parameterization rather than a uniformly dominant rule.

\begin{table}[t]
\centering
\small
\begin{tabular}{l l c c c c c c}
\toprule
\multirow{2}{*}{\bf Signal} & \multirow{2}{*}{\bf Quantity} & \multicolumn{6}{c}{\bf Centered distribution function} \\
\cmidrule{3-8}
 & & \bf Logistic & \bf Normal & \bf Laplace & \bf Sech & \bf Uniform & \bf Cauchy \\
\midrule
\multirow{3}{*}{\bf Doppler}
& AMSE      & 0.1147 & 0.1161 & 0.1148 & 0.1149 & \textbf{0.1141} & 0.2721  \\  \cline{3-8}
& $\lambda$ & 1.0151          & 0.6331 & 0.8341 & 0.6532 & 0.2110 & 1.4171 \\
& $k$       & 12.8989         & 18.4647 & 15.7778 & 18.4647 & 2.9192 & 3.4949 \\ 
\midrule
\multirow{3}{*}{\bf HeaviSine}
& AMSE      & 0.0430 & 0.0429 & 0.0447 & 0.0440 & \textbf{0.0428} & 0.2201 \\\cline{3-8}
& $\lambda$ & 0.8341 & 0.4120 & 0.6733 & 0.4924 & 0.1909 & 0.8341 \\
& $k$       & 14.4343 & 13.0909 & 16.5455 & 17.8889 & 8.4849 & 2.3434 \\
\midrule
\multirow{3}{*}{\bf Bumps}
& AMSE      & \textbf{0.3004} & 0.3019 & 0.3007 & 0.3027 & 0.3068 & 0.3987 \\ \cline{3-8}
& $\lambda$ & 1.1558 & 0.5728 & 1.0553 & 0.8140 & 0.2713 & 1.5979 \\
& $k$       & 7.9091 & 5.6061 & 11.1717 & 14.4343 & 2.3434 & 3.3030 \\
\midrule
\multirow{3}{*}{\bf Blocks}
& AMSE      & 0.1539 & \textbf{0.1532} & 0.1549 & 0.1557 & 0.1557 & 0.3131 \\\cline{3-8}
& $\lambda$ & 1.3166 & 0.6733 & 1.1357 & 0.7537 & 0.2512 & 0.9949 \\
& $k$       & 19.2323 & 15.0101 & 16.5455 & 18.2727 & 3.1111 & 2.3434 \\
\bottomrule
\end{tabular}
\caption{Oracle calibrated AMSE values and corresponding optimal $(\lambda,k)$ pairs for the six SCOPE specifications at SNR $=5$. Boldface indicates the smallest AMSE within each signal.}
\label{tab:amse_all_cdf}
\end{table}

\subsection{Robustness across signal to noise ratios}

We now examine robustness with respect to noise level by repeating the same analysis described in section~\ref{performance_with_CDFs} at SNR values $3$, $5$, and $7$. Table~\ref{tab:amse_cdf_snr} reports oracle calibrated AMSE values together with the corresponding optimal parameter pairs.

Table~\ref{tab:amse_cdf_snr} confirms that, within each fixed SNR regime, the thin and moderate-tailed
specifications remain close to one another, while the Cauchy specification is less
competitive. Because the test signals are rescaled across SNR levels, the raw AMSE
values should not be interpreted as a direct monotone measure of difficulty across
SNR. Their main use here is to compare methods within a fixed noise regime.

\begin{landscape}
\begin{table}[p]
\centering
\scriptsize
\begin{tabular}{cc|cccccc}
\toprule
\textbf{Signal} & \textbf{SNR} & \textbf{Logistic} & \textbf{Normal} & \textbf{Laplace} & \textbf{Sech} &
\textbf{Uniform} &
\textbf{Cauchy}\\
\midrule
\multirow{6}{*}{Doppler}
& 3 & 0.1013 & \textbf{0.1008} & 0.1023 & 0.1019 & 0.1017 & 0.2618 \\
&   & {\footnotesize(1.1558,19.4242)} & {\footnotesize(0.5728,15.7778)} & {\footnotesize(0.9145,19.4242)} & {\footnotesize(0.5728,13.4748)} & 
{\footnotesize(0.2713,7.1414)} & 
{\footnotesize(0.9346,2.3434)} \\ \cline{3-8}
& 5 & 0.1147 & 0.1161 & 0.1148 & 0.1149 & \textbf{0.1141} & 0.2721 \\ 
&   & {\footnotesize(1.0151,12.8989)} & {\footnotesize(0.6331,18.4646)} & {\footnotesize(0.8341,15.7778)} & {\footnotesize(0.6532,18.4646)} & {\footnotesize(0.2110,2.9192)}
&
{\footnotesize(1.4171,3.4949)} \\ \cline{3-8}
& 7 & 0.1267 & 0.1270 & 0.1261 & \textbf{0.1258} & 0.1293 & 0.2806 \\
&   & {\footnotesize(1.0553,11.9394)} & {\footnotesize(0.6130,16.5455)} & {\footnotesize(0.7537,9.0606)} & {\footnotesize(0.7135,19.8081)} & {\footnotesize(0.2311,3.1111)}
&
{\footnotesize(1.0352,2.5354)}
\\ 
\midrule
\multirow{6}{*}{HeaviSine}
& 3 & 0.0416 & 0.0399 & 0.0428 & 0.0421 & \textbf{0.0382} & 0.2172 \\
&   & {\footnotesize(0.9145,19.0404)} & {\footnotesize(0.4924,17.3131)} & {\footnotesize(0.7738,19.2323)} & {\footnotesize(0.5125,15.9697)} & {\footnotesize(0.2110,19.4242)}
&
{\footnotesize(1.1558,3.1111)}
\\ \cline{3-8}
& 5 & 0.0430 & 0.0429 & 0.0447 & 0.0440 & \textbf{0.0428} & 0.2201 \\
&   & {\footnotesize(0.8341,14.4343)} & {\footnotesize(0.4120,13.0909)} & {\footnotesize(0.6733,16.5455)} & {\footnotesize(0.4924,17.8889)} & {\footnotesize(0.1909,8.4849)}
&
{\footnotesize(0.8341,2.3434)}
\\ \cline{3-8}
& 7 & \textbf{0.0476} & 0.0477 & 0.0488 & 0.0488 & 0.0486 & 0.2234 \\
&   & {\footnotesize(0.6733,8.4848)} & {\footnotesize(0.4522,15.9697)} & {\footnotesize(0.6130,15.5859)} & {\footnotesize(0.3919,11.9394)} & {\footnotesize(0.2110,8.8687)} 
& {\footnotesize(0.9949,2.7273)} 
\\ 
\midrule
\multirow{6}{*}{Bumps}
& 3 & 0.2860 & 0.2859 & \textbf{0.2847} & 0.2858 & 0.2915 & 0.3600 \\
&   & {\footnotesize(1.0553,5.6061)} & {\footnotesize(0.5728,5.6061)} & {\footnotesize(0.9949,9.2525)} & {\footnotesize(0.6331,7.9091)} & {\footnotesize(0.2512,1.9596)} 
& {\footnotesize(1.7588,3.8788)}
\\ \cline{3-8}
& 5 & \textbf{0.3004} & 0.3019 & 0.3007 & 0.3027 & 0.3068 & 0.3987 \\
&   & {\footnotesize(1.1558,7.9091)} & {\footnotesize(0.5728,5.6061)} & {\footnotesize(1.0553,11.1717)} & {\footnotesize(0.8140,14.4343)} & {\footnotesize(0.2713,2.3434)} 
& {\footnotesize(1.5979,3.3030)}
\\ \cline{3-8}
& 7 & 0.3134 & \textbf{0.3123} & 0.3126 & 0.3135 & 0.3174 & 0.4200 \\
&   & {\footnotesize(1.0151,4.4545)} & {\footnotesize(0.5929,5.9899)} & {\footnotesize(1.0553,10.0202)} & {\footnotesize(0.8542,16.1616)} & {\footnotesize(0.2713,2.9192)} 
& {\footnotesize(0.9748,1.9596)}
\\
\midrule
\multirow{6}{*}{Blocks}
& 3 & \textbf{0.1527} & 0.1534 & 0.1553 & 0.1541 & 0.1559 & 0.3001 \\
&   & {\footnotesize(1.2764,15.2020)} & {\footnotesize(0.6733,15.9697)} & {\footnotesize(1.1156,19.2323)} & {\footnotesize(0.7135,15.7778)} & {\footnotesize(0.2713,3.8788)}
& {\footnotesize(1.9196,4.6465)}
\\ \cline{3-8}
& 5 & 0.1539 & \textbf{0.1532} & 0.1549 & 0.1557 & 0.1557 & 0.3131 \\
&   & {\footnotesize(1.3166,19.2323)} & {\footnotesize(0.6733,15.0101)} & {\footnotesize(1.1357,16.5455)} & {\footnotesize(0.7537,18.2727)} & {\footnotesize(0.2512,3.1111)} 
& {\footnotesize(0.9949,2.3434)}
\\ \cline{3-8}
& 7 & 0.1521 & \textbf{0.1520} & 0.1542 & 0.1536 & 0.1541 & 0.3200 \\
&   & {\footnotesize(1.3367,19.8081)} & {\footnotesize(0.6331,12.1313)} & {\footnotesize(1.0151,14.4343)} & {\footnotesize(0.7135,17.8889)} & {\footnotesize(0.2713,4.2626)} 
& {\footnotesize(0.8944,1.9596)} 
\\
\bottomrule
\end{tabular}
\caption{Oracle calibrated AMSE values for four benchmark signals across six SCOPE specifications and three SNR levels. The parenthesized entries beneath each AMSE value give the corresponding optimal $(\lambda,k)$ pair. Boldface indicates the smallest AMSE within each signal and SNR combination.}
\label{tab:amse_cdf_snr}
\end{table}
\end{landscape}

\subsection{Comparison with benchmark denoising procedures}

We next compare the best performing SCOPE specification for each signal against eight established denoising procedures: Bayesian adaptive multiresolution shrinkage \citep{VidakovicRuggeri2001} (BAMS), Decompsh \citep{Huang01082000}, block median and block mean procedures \citep{ABRAMOVICH2002435}, the hybrid block median rule \citep{ABRAMOVICH2002435}, BlockJS \citep{10.1214/aos/1018031262}, VisuShrink \citep{DonohoJohnstone1994}, and generalized cross validation \citep{AMATO1998101}. These methods span Bayesian, blockwise, level dependent, and global thresholding paradigms, and therefore provide a reasonably broad benchmark set.

For each signal, the SCOPE entry uses the best performing centered
distribution function together with its calibrated parameter pair.
Figure~\ref{fig:comparison_existing} displays the Monte Carlo MSE
distributions at SNR $=5$. The corresponding results for SNR = 3 and SNR = 7 are provided in Figure~SI 4 and Figure~SI 5 of the Supplementary Information. The results indicate that SCOPE is broadly
competitive with the benchmark procedures, achieving best or near-best
performance for several of the four test signals.

\begin{figure}
\centering
\includegraphics[width=.55\textwidth]{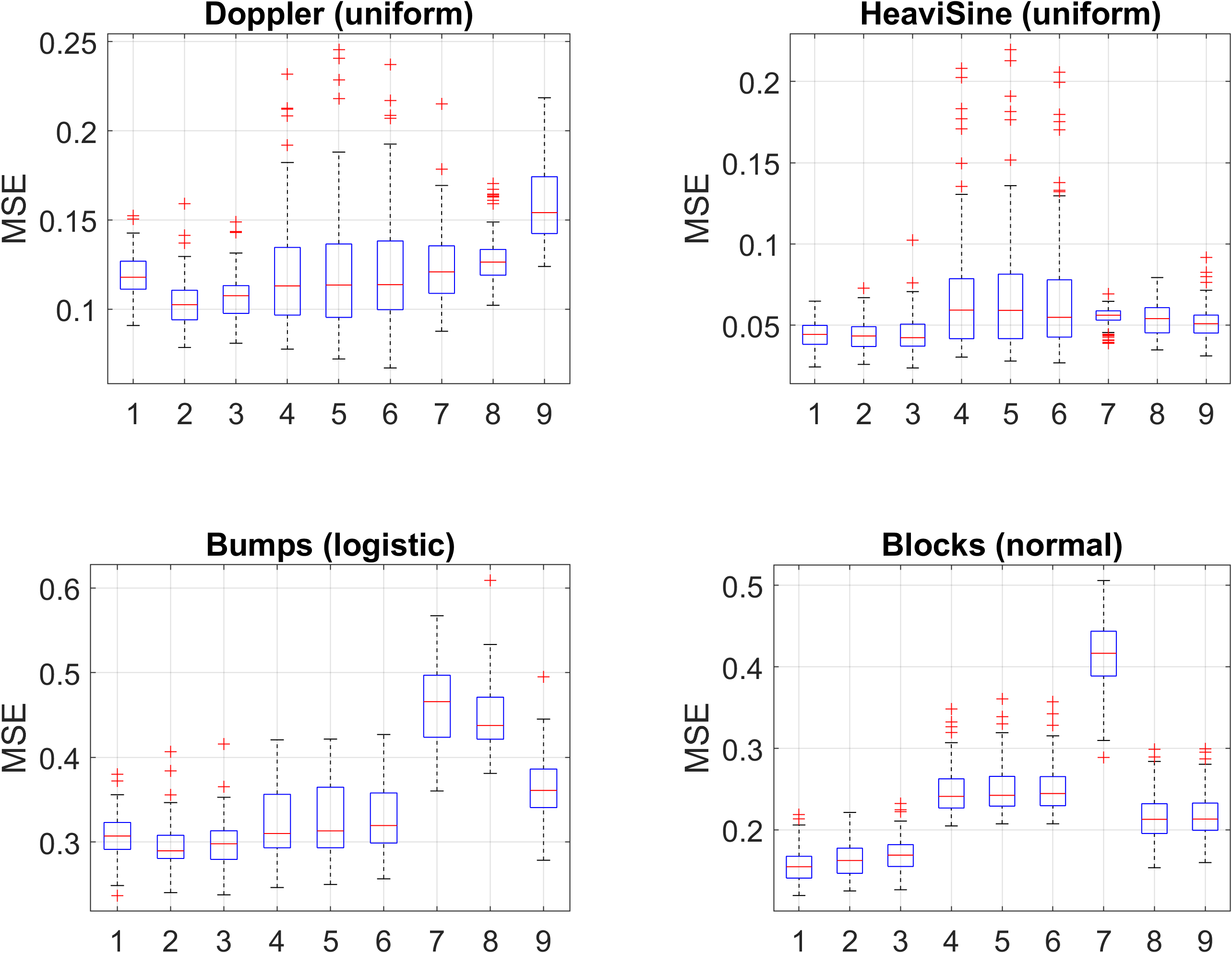}
\caption{Monte Carlo MSE distributions at SNR $=5$ for Best SCOPE and eight
benchmark denoising procedures. The boxplot positions are: 1 Best SCOPE,
2 BAMS, 3 Decompsh, 4 block median, 5 block mean, 6 hybrid block median,
7 BlockJS, 8 VisuShrink, and 9 GCV. For each signal, the SCOPE entry uses
the best performing centered distribution function under oracle calibration.}
\label{fig:comparison_existing}
\end{figure}

\section{Discussion and Conclusions}
\label{sec:conclusions}

We have introduced SCOPE shrinkage as a unified distribution based framework for constructing 
sign-preserving shrinkage rules from centered cumulative distribution functions of symmetric unimodal distributions. The central idea is simple but flexible: if
\begin{eqnarray}
F^*(x)=2F(x)-1,
\end{eqnarray}
then the rule
\begin{eqnarray}
\delta(x;\lambda,k)=x\,|F^*(\lambda x)|^k
\end{eqnarray}
produces a broad family of attenuation profiles that combine strong local contraction near the origin with near identity behavior in the tails. This construction yields shrinkage rules that are interpretable, structurally coherent, and adaptable to different signal morphologies through the probabilistic shape of the underlying distribution function.

The main lesson is that the SCOPE construction separates two aspects of
shrinkage that are often intertwined. The scale parameter $\lambda$ determines
where attenuation becomes active, while the exponent $k$ determines how sharply
the rule moves from local contraction to near identity behavior. This makes the
family easy to interpret: $\lambda$ acts like a threshold-scale parameter, and
$k$ acts like a transition-shape parameter.

The structural results explain why this construction is stable. Centered CDFs
automatically generate shrinkage rules that preserve sign, remain contractive,
are monotone and continuous, and approach the identity in a relative sense in the tails. The mixture
of uniforms representation further shows that SCOPE attenuation factors behave
like averaged softened-thresholding profiles. Thus the framework is not merely
a collection of formulas, but a probabilistic way to design and interpret
threshold-like shrinkage rules.

The Bayesian content of the framework is also more precise when stated in penalty form. For the broad class considered here, every SCOPE rule admits an even penalty representation with the penalty
nondecreasing in coefficient magnitude,  and hence a generalized MAP interpretation in the Gaussian location model. For subclasses with suitable tail separation, this formal representation corresponds to a proper prior density; for other subclasses, including the uniform prototype, the correct interpretation is penalized likelihood rather than MAP under a proper prior. This distinction is important theoretically and helps place the SCOPE family within a broader shrinkage and regularization literature without overstating the Bayesian claim.

The three prototype rules retained in the main paper illustrate the main qualitative regimes of the framework. The logistic rule provides a smooth analytic baseline and performs strongly across several benchmark signals. The uniform rule gives a transparent reference case with an explicit active shrinkage region and exact identity outside that region. The Cauchy rule illustrates the effect of slow saturation in the centered
CDF. Conceptually, it is useful because it shows how heavy-tailed generating
distributions can produce attenuation far into the tails. Empirically,
however, this behavior was not favorable in the Donoho and Johnstone
experiments considered here, where the Cauchy specification overshrank
moderate and large empirical coefficients and was less competitive than the
thin and moderate-tailed alternatives.

The simulation study shows that calibrated SCOPE has competitive benchmark
risk relative to a range of established denoising procedures. Across the
Donoho-Johnstone benchmark signals, the best thin and moderate-tailed
SCOPE specifications perform similarly and often achieve strong
reconstruction accuracy relative to existing methods. These results identify
favorable members of the SCOPE family, while the SURE-type criteria
discussed in the supplementary material provide a route to data-driven
implementation. 

Taken together, these results suggest that centered distribution functions provide a natural design principle for shrinkage. The SCOPE framework is broad enough to recover qualitatively different attenuation profiles, yet structured enough to admit common analytical treatment. This combination of flexibility and coherence is, in our view, the main methodological value of the approach.

Several directions remain open. The first is a fuller study of data-driven
calibration, especially in level-dependent and adaptive settings.
 A second direction is extension to
redundant and nondecimated transforms, where the smoothness and
interpretability of SCOPE-type rules may be especially useful. It would also
be natural to investigate multivariate and blockwise versions of the
framework, as well as applications to inverse problems and other
high-dimensional estimation settings. In all of these directions, the central
message remains the same: centered symmetric distribution functions offer a
simple and powerful organizing principle for shrinkage design.
\vspace{1mm} \\
\noindent
Supplementary materials and code developed for this study are available at \url{https://github.com/Vijini95/SCOPE_Shrinkage}.

\section*{Acknowledgments}
B. Vidakovic acknowledges the partial support of the H.O. Hartley Chair Foundation and NSF Award 2515246 at Texas A\&M University.

\bibliographystyle{apalike}
\bibliography{references}

\end{document}